\documentclass{article}

\usepackage{graphicx,verbatim,setspace,lineno}
\usepackage{willcommands}
\usepackage{amsmath,amssymb,amsthm,bm}
\usepackage{pdfpages}
\usepackage{latexsym,color,mdframed,textcomp}
\usepackage[round]{natbib}
\usepackage[lined,ruled]{algorithm2e}

\begin{document}
%\doublespacing
%\linenumbers

\newtheorem{theorem}{Theorem}
\newtheorem{corollary}[theorem]{Corollary}
\newtheorem{lemma}[theorem]{Lemma}
\newtheorem{observation}[theorem]{Observation}
\newtheorem{proposition}[theorem]{Proposition}
\newtheorem{definition}[theorem]{Definition}
\newtheorem{claim}[theorem]{Claim}
\newtheorem{fact}[theorem]{Fact}
\newtheorem{assumption}[theorem]{Assumption}

\title{Bias Correction in Species Distribution Models:
  Pooling Survey and Collection Data for Multiple Species}
\author{William Fithian, Jane Elith, Trevor Hastie, and David A. Keith}
\maketitle

\newcommand{\D}{\mathcal{D}}
\renewcommand{\grad}{\nabla}
\renewcommand{\S}{\mathcal{S}}
\newcommand{\T}{\mathcal{T}}
\newcommand{\bbX}{\mathbb{X}}
\newcommand{\cX}{\mathcal{X}}
\newcommand{\cZ}{\mathcal{Z}}
\newcommand{\red}{\textcolor{red}}
\newcommand{\PA}{\text{PA}}
\newcommand{\PO}{\text{PO}}
\newcommand{\BG}{\text{BG}}

\begin{abstract}
  \begin{enumerate}
  \item Presence-only records may provide data on the distributions of
    rare species, but commonly
    suffer from large, unknown biases due to their
    typically haphazard collection schemes.  Presence-absence or count data
    collected in systematic, planned surveys are more reliable but
    typically less
    abundant.
  \item We proposed a probabilistic model to allow for joint
  analysis of presence-only and survey data to exploit their
  complementary strengths.
  Our method pools presence-only and presence-absence data for many
  species and maximizes a joint likelihood, simultaneously estimating
  and adjusting for the sampling bias affecting the presence-only data.
  By assuming that the sampling bias is the same for all species,
  we can borrow strength across species to
  efficiently estimate the bias and improve our inference from
  presence-only data.

  \item We evaluate our model's performance on data for
  36 eucalypt species in southeastern Australia.  We find that
  presence-only records exhibit a strong
  sampling bias toward the coast and toward Sydney, the largest
  city. Our data-pooling technique substantially improves
  the out-of-sample predictive performance of our model when the
  amount of available presence-absence data for a given species is
  scarce.

\item If we have only presence-only data and
  no presence-absence data for a given species, but both types of data
  for several other species that suffer from the same spatial
  sampling bias, then our method can obtain an unbiased estimate of
  the first species' geographic range.
  \end{enumerate}
\end{abstract}

\section{Introduction}

Presence-only data sets \citep{pearce2006modelling} are  key sources
of information about factors that influence the habitat relationships
and distributions of plants and animals, and
% of information about
% the determinants of habitat suitability, and
analyzing them accurately is
crucial for successful wildlife management policy.
Examples include specimen collection data from museums and herbaria,
and atlas records maintained by government agencies and non-government
organizations. Often, these are the most abundant and freely
available data on species occurrence.
However, sampling
bias often confounds efforts to reconstruct species distributions.

Recent work has shown that several of the most popular methods for
species distribution modeling
with presence-only data are equivalent or nearly equivalent to
each other, and may be motivated by an underlying inhomogeneous
Poisson process (IPP) model \citep{AartsIPP, WartonIPP,
  renner2013equivalence, fithian2013finite}.  In effect, all of these
methods estimate the distribution of species {\em sightings}
(i.e. of presence-only records) under an exponential family model
for the species distribution \citep{fithian2013finite}.
Because presence-only data are commonly
collected opportunistically, the sightings distribution is typically
biased toward regions more frequented by whoever is collecting the
data.  Thus, it may be a poor proxy for the distribution of {\em all}
organisms of that species, sighted or unsighted.

Presence-absence and other data sets collected via systematic surveys
do not typically suffer from such bias. Even if (say) survey sites
cluster near a major city, the data will
contain more presences {\em and} more absences there.
% in the oversampled
% region, which typically will not lead to overestimation of species
% abundance in the oversampled region.
% By contrast, presence-only data carry no accompanying records of
% unsuccessful searches, so that uneven
% sampling effort can be mistaken for differences in abundance.
Unfortunately, if the species under study is rare, presence-absence
data may carry little information about its species
distribution.  In this article we consider a large presence-absence
data set on eucalypts in southeastern Australia.  Although there are
over 32,000 sites, 4 of the 36 species we consider are present in
fewer than 20 of the survey sites. Presence-only
data for rare species, suitably adjusted for bias, can supplement
survey data.

We propose a natural extension of the IPP model for
single-species presence-only data, with a view toward estimating and
adjusting for sampling bias.  In particular, our method brings
other sources of data --- presence-only and
presence-absence data for multiple species --- to bear on the problem,
by incorporating them into a single joint probabilistic model to
estimate and adjust for the bias.
Some of the most popular approaches to analysis of
presence-absence or presence-only data for one species are
special cases of our joint approach. We evaluate our model using both
presence only and presence-absence data for a set of eucalypt species from
southeastern Australia.  An R package implementing
our method, \texttt{multispeciesPP}, is available in the public github
repository \texttt{wfithian/multispeciesPP}.

\subsection{The Inhomogeneous Poisson Process Model}

The starting point for our model is the random set $\S$ of point
locations of {\em all} individuals of a given species in some
geographic domain $\D$. In spatial statistics, such a random set is
called a {\em point process}, and we will call the set $\S$
the {\em species process}. Typically $\D$ is a bounded two-dimensional
region.

The IPP model is a probabilistic model for the random set
$\S=\{s_i\}\sub \D$. It is characterized by an {\em intensity function}
$\lambda(s)$, which maps sites in $\D$ to non-negative real
numbers. Informally, $\lambda(s)$ quantifies how many $s_i$
are likely to occur near $s$.

For any sub-region $A$ within $\D$, let $N_\S(A)$ denote the number of
points $s_i\in \S$ falling into $A$. If $\S$ is an IPP with
intensity $\lambda$, then $N_\S(A)$ is a Poisson random variable with
mean
\begin{linenomath}\begin{equation}\label{eq:defLambda}
  \Lambda(A) = \int_A \lambda(s)\,ds.
\end{equation}\end{linenomath}
For non-overlapping sub-regions $A$ and $B$, $N_\S(A)$ and $N_\S(B)$ are
independent.

If $A$ is a quadrat centered at $s$, small enough that $\lambda$
is nearly constant over $A$, then $\Lambda(A) \approx \lambda(s)|A|$, where
$|A|$ represents the area of sub-region $A$.
Therefore, the intensity $\lambda(s)$ represents the expected species
count per unit area near $s$. The integral $\Lambda(\D)$ over the
entire study region is the expectation of $N_\S(\D)$, the population
size.

We can normalize $\lambda(s)$ to obtain the function ${p_\lambda(s) =
  \frac{1}{\Lambda(\D)} \lambda(s)}$, which integrates to one and
represents the probability distribution of individuals.
An IPP may be defined equivalently as an independent random sample
from $p_\lambda(s)$ whose size $N_\S(\D)$ is itself a Poisson
random variable with mean $\Lambda(\D)$.
Conditional on the number $N_\S(\D)$ of points, their locations
$s_1,\ldots,s_{N_\S(\D)}$ are independent and identically distributed (i.i.d.)
draws from $p_\lambda(s)$. We
call the intensity $\lambda(s)$ of $\S$ the
{\em species intensity} and the density function $p_\lambda(s)$
the {\em species distribution}.
See \cite{cressie1993} for a more in-depth discussion of Poisson
processes and other point process models.

The first panel of Figure~\ref{fig:toyIPP} shows a realization
of a simulated IPP on a rectangular domain.  The
background coloring shows the intensity, and the black circles denote
the $s_i\in \S$.  Relatively more of the black circles occur in the
green region where the intensity is highest.

\begin{figure}
  \centering
  \includegraphics[width=.93\textwidth]{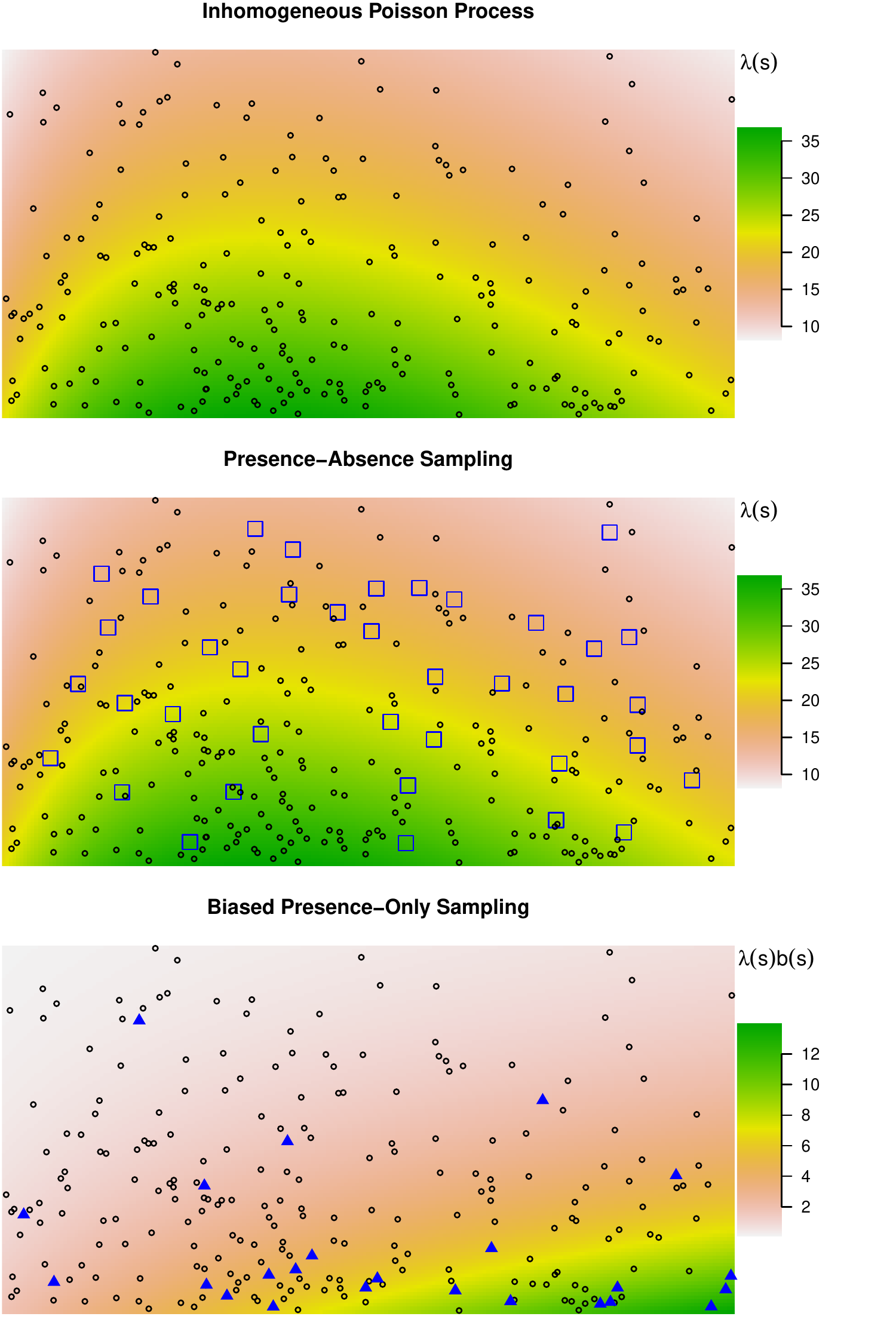}
  \caption{A Poisson process with two different sampling schemes
    representing our models for presence-absence and presence-only
    data.  The top panel represents the species process against a heat
    map of the species intensity $\lambda(s)$.  The second panel
    depicts presence-absence or other systematic survey methods:
    quadrats (blue squares) are surveyed and organisms counted in each
    one.  The third panel depicts biased presence-only
    sampling, with the blue triangles indicating the presence-only
    process, a small and unrepresentative subset of the species
    process.  The heat map shows the presence-only intensity ${\lambda(s)
      \cdot b(s)}$.}
  \label{fig:toyIPP}
\end{figure}

In modern ecological data sets each site in the domain
has associated environmental covariates $x(s)$ measured in the field,
by satellite, or on biophysical maps.
These are assumed to drive the intensity $\lambda(s)$.
It is convenient to model the intensity using a log-linear
form for its dependence on the features:
\begin{linenomath}\begin{equation}\label{eq:loglin}
  \log \lambda(s) = \alpha + \beta' x(s)
\end{equation}\end{linenomath}
The linear assumption in \eqref{eq:loglin}
is not nearly as restrictive as it might at first seem. The
feature vector $x(s)$ could contain basis expansions
such as interactions or spline terms allowing us to fit highly
nonlinear functions of the raw features
(see, e.g., \cite{ESL}).

Unfortunately, we cannot observe the entire species process
$\S$, but we can glimpse it incompletely in various ways. The most
straightforward and reliable way to learn about $\S$ is with presence-absence or count
sampling via systematic surveys, as depicted in the second panel of
Figure~\ref{fig:toyIPP}. In survey data, an ecologist visits numerous
quadrats $A_i$ throughout $\D$ (the blue
squares) and records the species' occurrence or count $N_\S(A_i)$ at each one.

Presence-only data is a less reliable but often more abundant
source of information about $\S$. We discuss our model for
presence-only data in the next section.

\subsection{Thinned Poisson Processes}

The {\em presence-only process} $\T$
comprises the set of all individuals observed by opportunistic
presence-only sampling. Assuming they are identified
correctly (not always a given), $\T$ is the subset of $\S$
that remains after the unobserved individuals are removed --- or
{\em thinned}, in statistical language.

We propose a simple model for how $\T$ arises given $\S$:
an individual at location $s_i\in\S$ is
included in $\T$ (is observed) with probability $b(s_i)\in [0,1]$,
independently of all other individuals.  The function $b(s)$, which we
call the {\em sampling bias}, represents the expected fraction
(typically small) of all organisms near location $s$ that are counted
in the presence-only data. As a result of the biased thinning,
individuals in areas with relatively large $b(s)$ will tend to be
over-represented relative to areas with small $b(s)$.

It can be shown that marginally
\begin{linenomath}\begin{equation}\label{eq:thinPP}
  \T\sim \text{IPP}(\lambda(s) \; b(s))
\end{equation}\end{linenomath}
For a formal proof, see \citet{cressie1993} section 8.5.6, p. 689.
Informally, a small sub-region $A$ centered at $s$ contains on
average $|A|\lambda(s)$ individuals, of which on average
$|A|\lambda(s)\;b(s)$ are observed. If two sites $s_1$ and $s_2$ have
the same intensity $\lambda(s_1)=\lambda(s_2)$, but $b(s_1) =
2b(s_2)$, then~(\ref{eq:thinPP}) means the presence-only data will
have about twice as many records near $s_1$ as $s_2$.

The third panel of Figure~\ref{fig:toyIPP} displays a thinning
of the Poisson process shown in the first two panels. The thinned
process $\T$, consisting of the solid blue triangles, is shown against a
heat map of the biased intensity $\lambda(s) \; b(s)$.

Sampling bias in presence-only data is not a subtle phenomenon. By our
estimates in Section~\ref{sec:euca}, $b(s)$ ranges from about $3\times
10^{-3}$ near Sydney to about $3\times 10^{-7}$ in the more rugged inland areas
of southeastern Australia --- a dynamic range of 10,000.

Some of the most popular methods for analyzing presence-only data
are based explicitly or implicitly on fitting a log-linear IPP model
for the process $\T$. It is clear from~(\ref{eq:thinPP}) that this
approach effectively yields an estimate of the presence-only
  intensity $\lambda(s)\; b(s)$ and not the species intensity
$\lambda(s)$. These estimates may be dramatically inaccurate if treated
as estimates of the species intensity or species distribution.

In the case of presence-only data, $b(s)$ typically depends
on the behavior of whoever is collecting the presence-only data.
When sampling bias is thought to depend mainly on a few measured
covariates $z(s)$ (such as distance from a road network
or a large city), several authors have proposed modeling presence-only
data directly as a thinned Poisson process \citep{chakraborty2011point, fithian2013finite,
  warton2013model, hefley2013nondetection}.  A similar method was
proposed in \citet{dudik2005correcting} in the context of the Maxent
method, and \citet{zaniewski2002predicting} similarly propose weighting background points in presence-background GAMs according to a model for their likelihood of appearing as absences in presence-absence data.

If both $\lambda$ and $b$ are modeled as
log-linear in their respective covariates,
then we have
\begin{linenomath}\begin{equation}\label{eq:loglinbias}
  \log \left(\lambda(s) \; b(s)\right) = \alpha + \beta'x(s) +
  \gamma + \delta'z(s)
\end{equation}\end{linenomath}
Modeling the bias as above amounts to estimating the effects of the
variables $x(s)$ in a generalized linear model (GLM) for the Poisson
process $\T$, while adjusting for control variables $z(s)$. We will
refer to it as the ``regression adjustment'' strategy.\footnote{Because
  $b(s)$ is a probability, readers
  familiar with logistic regression may wonder why we model
  $b(s)=e^{\gamma+\delta'z(s)}$ instead of $b(s) =
  \frac{e^{\gamma+\delta'z(s)}}{1+e^{\gamma+\delta'z(s)}}$. When
  $b(s)$ is close to zero, the denominator
  $1+e^{\gamma+\delta'z(s)}\approx 1$ and the two models roughly
  coincide. We use the log-linear form
  because it leads to the convenient log-linear form for the presence-only
  intensity in~(\ref{eq:loglinbias}).}

\subsection{Identifiability, Abundance, and the Role of $\gamma$}\label{sec:ident}

Modeling presence-only data as a thinned Poisson process as
in~\eqref{eq:loglinbias} sheds light on why it is so difficult to
obtain useful estimates of presence probabilities: at best,
presence-only data reflect relative intensities and not properly
calibrated probabilities of occurrence.
If the covariates comprising
$x$ and $z$ are distinct and have no perfect linear dependencies on one
other, then $\beta$, $\delta$, and the sum $\alpha+\gamma$ are identifiable,
but individually $\alpha$ and $\gamma$ are not.

To see why, consider
\begin{enumerate}
\item a presence-only process governed by species
  process parameters $(\alpha, \beta)$ and thinning parameters
  $(\gamma,\delta)$, and
\item an alternative process with $\alpha$ replaced by
  ${\tilde\alpha = \alpha + \log 2}$ (trees are twice as abundant
  overall) and $\gamma$ replaced by ${\tilde \gamma
    = \gamma - \log 2}$ (the chance of observing any given tree is halved
  overall).
\end{enumerate}
\eqref{eq:loglinbias} means that the probability distribution of the
thinned process $\T$ is {\em identical} in these two cases.
Therefore, no matter how much data we
collect, we can never distinguish parameters
$(\alpha,\beta,\gamma,\delta)$ from
$\left(\tilde\alpha, \beta, \tilde\gamma,\delta\right)$ on the basis of
presence-only data alone.

Because $\beta$ is identifiable, we can use presence-only data
alone to obtain an estimate for
$\lambda(s)$ up to the unknown proportionality constant $e^{\alpha}$;
in other words, we can estimate the species distribution
$p_\lambda$ but not the species intensity $\lambda$.
If the model is correctly
specified, then likelihood estimation gives an asymptotically
unbiased estimate of the model's parameters \citep[see
e.g.][]{lehmann1998theory}.

The species intensity $\lambda(s)$ is the product of the
species distribution $p_\lambda(s)$ and the overall abundance
$\Lambda(\D)$.  Predicting the probability that a species is present in some
new quadrat $A$ requires information about both.
Considerable attention has focused on whether or not we can obtain
plausible estimates of abundance or of presence probabilities based on
presence-only data alone.  Methods like Maxent and presence-background logistic
regression explicitly estimate $p_\lambda(s)$, but require an
externally-given specification of the overall abundance if presence
probabilities are required \citep[for example, Maxent's ``logistic
output,'' see][]{elith2011statistical}.  Other methods attempt to
estimate presence probabilities \citep{lele2006weighted,
  royle2012likelihood}, but estimates can be highly variable and
non-robust to minor misspecifications of the modeling assumptions
\citep{ward2009em, hastie2013inference}.

One of the purported advantages of the IPP as a model for presence-only
data is that it does yield an estimate of overall abundance because
its intercept term is identifiable \citep{renner2013equivalence}. However,
\citet{fithian2013finite}
show that the maximum likelihood estimate of $\widehat{\Lambda}(\D)$
obtained from that model is exactly the
number of presence-only records in the data set, so it should not be
regarded as an estimate of the overall abundance.

\subsection{Challenges for Regression Adjustment Using Presence-Only
  Data}\label{sec:regadj}

Regression adjustment works best when the control variables $z(s)$ are not too
correlated with $x(s)$, the covariates of interest.  If e.g. $x_1(s)$ and $z_2(s)$
are highly correlated, then we can increase $\beta_1$ and decrease
$\delta_2$ without altering the model's predictions much.  As a result, we may
need a great deal of data to distinguish the effects of $\beta_1$ and
$\delta_2$, and hence to tease apart $\lambda$ and $b$.

Unfortunately, correlation between $x$ and $z$ is all too common, in
part because humans respond to many of the same
covariates as other species do.  For example, in southeastern
Australia, major
population centers lie along the eastern coastline, but many important
climatic variables are also correlated with distance from the coast.
Figure~\ref{fig:bc02map} plots  the mean diurnal temperature range
over a region of southeastern Australia, juxtaposed against our
fitted bias from the model we will fit in Section~\ref{sec:euca} using
pooled data.  The bias is almost perfectly confounded with temperature
range, making estimation highly variable even if the model is
correctly specified.

Another difficulty of regression adjustment in real-world settings is that
our functional form is always misspecified.  In particular, it may be
difficult to obtain good features in modeling the bias. Suppose for example
that $x_1(s)$ is highly correlated with $z_2(s)^2$, which (unbeknown to us)
is an important bias covariate.  If we fit our model without including
$z_2(s)^2$, then the $\beta_1 x_1(s)$ term may serve as a proxy for
the missing quadratic effect, biasing our estimate $\hat\beta_1$.

In practice we expect there to be missing variables as well as
unaccounted-for nonlinearities and interactions in our models for both
the species intensities and the bias alike.
We can mitigate this sort of problem by adding more basis functions to
$z(s)$, but as the dimension of the model increases, the
standard errors of our estimates will tend to increase along with it.

If any bias covariates coincide with $x$ variables ---
e.g., if rugged terrain is undersampled due to inaccessibility {\em
  and} has an effect on a species'
abundance --- then the corresponding coordinates of $\beta$ and
$\delta$ are unidentifiable no
matter how much presence-only data we collect.

For all its difficulties, regression
adjustment on presence-only data is often preferable to no
adjustment, and may be the best option when unbiased survey data is
unavailable. Still, when some components of $x$ are nearly or
completely confounded by $z$, a small quantity of unbiased
data can go a long way, because it may provide the only solid
information to distinguish true effects from bias
effects (see, e.g., Figure~\ref{fig:confEll}).
This motivates a method that can combine both biased and unbiased data to
exploit the strengths of each.

\begin{figure}
  \centering
  \includegraphics[width=\textwidth]{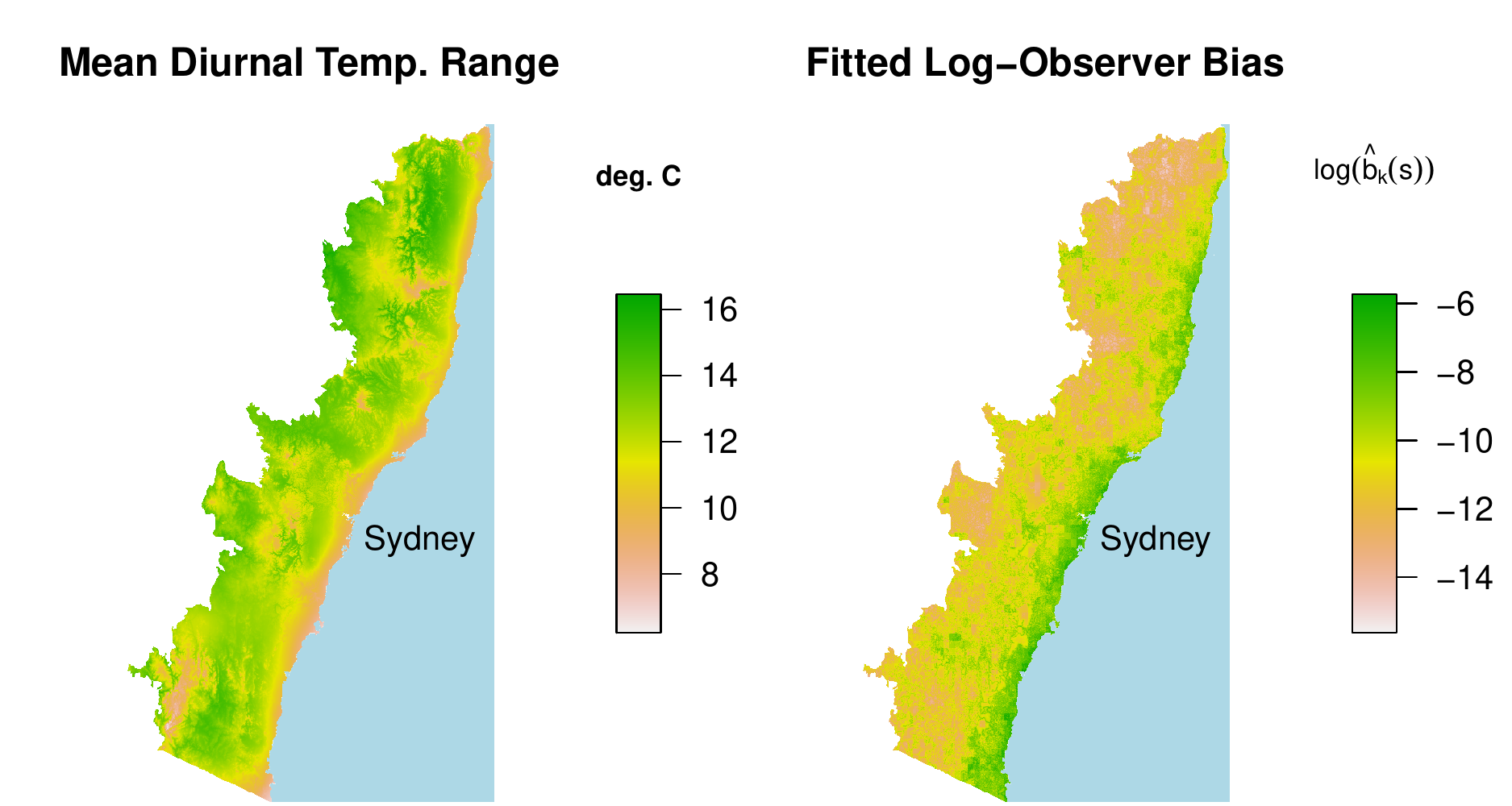}
  \caption{Mean diurnal temperature range in a coastal region of
    southeastern Australia, juxtaposed against our model's fitted sampling
    bias. The black triangles show the locations of cities.
    Because most people live near the coast,
    sampling bias is highly correlated with distance from
    the coastline.  Unfortunately, so are many important climatic
    variables.  Because these variables are almost perfectly
    confounded with bias, it is very difficult to correct for sampling bias
    using presence-only data alone.}
  \label{fig:bc02map}
\end{figure}

\section{A Unifying Model for Presence-Absence and Presence-Only Data}\label{sec:model}

The above discussion motivates a natural unifying model to explain
both presence-absence and presence-only data for many species at once,
which we discuss in detail here.

Assume we are equipped with a real-valued
environmental covariate function $x(s)$, which takes
values in $\R^p$, and bias covariate function $z(s)$, which takes
values in $\R^r$. $x(s)$ and $z(s)$ represent features thought
respectively to influence habitat suitably and heterogeneity in
sampling effort.
In general some variables may appear in both $x$ and $z$.

Let $m$ denote the total number of species for which we have data.
Let $\S_k$ and $\T_k$ denote the species and presence-only processes
for species $k = 1,\ldots,m$.
Our data set consists of two distinct types of observations for each
species, presence-absence or count survey sites and presence-only sites.
By modeling each of the two sampling schemes in terms
of the latent species processes, we can use likelihood methods to pool
data from each.
We adopt the convention of indexing observations by the letter
$i$, variables by the letter $j$, and species by the letter $k$.

Each observation $i$ is associated with a site $s_i\in \D$,
as well as covariates $x_i = x(s_i)$ and $z_i = z(s_i)$.  For
survey sites, $s_i$ represents the centroid of a quadrat
$A_i$. At survey site $i$ we observe counts $N_{ik}=N_{\S_k}(A_i)$ or
binary presence/absence indicators $y_{ik}$, with $y_{ik}=1$ if
$N_{ik}>0$ and $y_{ik}=0$ otherwise.
% Typically, data for every
% species is collected in every quadrat, but it is no problem if
% some of the $N_{ik}$ are missing.

\subsection{Joint Log-Linear IPP Model for Multispecies Data}

For species $k$, we propose to model $\S_k\sim
\text{IPP}(\lambda_k(s))$, with $\T_k\sim \text{IPP}(\lambda_k(s)\;
b_k(s))$ obtained by thinning $\S_k$ via $b_k(s)$.
Both $\S_k$ and $\T_k$ are assumed to be independent across species
with log-linear intensity $\lambda_k$ and bias $b_k$:
\begin{linenomath}\begin{align}\label{eq:logLinAssumption}
  \log \lambda_k(s) &= \alpha_k + \beta_k' x(s)\\\label{eq:propBiasAssumption}
  \log b_k(s) &= \gamma_k + \delta' z(s).
\end{align}\end{linenomath}
Note that $\delta$ is the only model
parameter {\em not} allowed to vary across species --- in
other words, the functions $b_1(s),\ldots,b_m(s)$ are all
assumed to be proportional to one another. We call this the {\em
  proportional bias assumption,} and it lets us
pool information across all $m$ species to
jointly estimate the selection bias affecting the presence-only
data. When $m$ is large, this affords us the option of working with a
more expansive model for the bias term, reducing the resulting bias
in our estimates for the $\alpha_k$ and $\beta_k$, which are typically
of greater scientific interest.

Scientifically, the proportional-bias assumption corresponds to
a belief that the biasing process has more to do with the behavior of
observers than of plants and animals.  Put simply, if one species is
oversampled near Sydney by a factor of five relative to another
region with similar features,
the most likely explanation is that observers spend one fifth as
much time in the second region as they do in Sydney.
In that case, we should expect other species to be undersampled in the
second region by roughly the same factor relative to Sydney.

The proportional-bias assumption could well be violated if, for
example, most of the
observers collecting samples for species 1 reside in Sydney and those
collecting samples for species 2 reside in Newcastle.  Even
under the best of circumstances, this modeling assumption (like the
other assumptions we have made) is an
idealization of the truth, but it can be a very useful one if it is
not too badly wrong.  In
Section~\ref{sec:euca} we provide evidence that the proportional bias
model improves out-of-sample reconstruction of the species intensity.

We allow $\gamma_k$, the proportionality constant of the sampling
bias, to vary by species, representing a species-dependent effect on
overall sampling effort.
This allows us to account for observers systematically oversampling
some species relative to others.  For example, if an ecologist is
collecting samples in a forest, she may preferentially collect samples
from rarer species.  In Section~\ref{sec:euca} we give some evidence
that sampling effort does indeed vary significantly by species in just
this way.  The cost of letting $\gamma_k$ vary by species is that $\alpha_k$ is
unidentifiable unless we have some presence-absence data for species
$k$.  Consequently, we can estimate the species distribution
$p_\lambda(s)$, but not the overall abundance $\Lambda(\D)$, unless we
have some presence-absence or count data for species $k$.

While our paper was in press we learned of concurrent and independent work by \citet{giraud2014capitalising} which uses a similar Poisson thinning model to combine survey and collection data on discrete domains.

\subsection{Induced Model for Survey Data}

Survey data provides information about the species process $\S_k$
restricted to the survey quadrats. If the point locations of each
individual within quadrat $A_i$ are recorded, we can directly model
those locations as a log-linear IPP over the entire surveyed domain
$\bigcup_i A_i$. Often we do not have access to such granular data,
and only the count $N_{ik}=N_{\S_k}(A_i)$ or presence/absence
$y_{ik}$ is recorded. In such cases,
the IPP model still induces a GLM likelihood for the available summary
statistics $N_{ik}$ or $y_{ik}$, so that we can maximize likelihood
for the available data.

If the features are continuous, then for a small quadrat $A_i$ the
species count at the site is
\begin{linenomath}\begin{equation}
  N_{ik} = N_{\S_k}(A_i) \approx \text{Pois}(|A_i|\lambda_k(s_i))
  = \text{Pois}\left(|A_i|\exp\{\alpha_k + \beta_k'x(s_i)\}\right).
\end{equation}\end{linenomath}
Thus our joint IPP model induces a Poisson log-linear model for survey
count data. The probability of $y_{ik}=1$ is
\begin{linenomath}\begin{equation}
  \P(N_{ik}>0) \approx 1-\exp\{-|A_i|\lambda_k(s_i)\}
  = 1-\exp\{-e^{\alpha_k + \beta_k'x(s_i) + \log |A_i|}\},
\end{equation}\end{linenomath}
a Bernoulli GLM with complementary log-log link
\citep{mccullagh1989generalized, baddeley2010spatial}.
The complementary log-log link has
been used before to study presence-absence data in ecology
\citep[e.g.][]{yee1991generalized, royle2008hierarchical,
  lindenmayer2009nest}. If the
expected count $\eta=|A_i|\lambda_k(s_i)$ is very small then there is not
much difference between the complementary log-log link, the logistic
link, and the log link, since
\begin{linenomath}\begin{equation}
  1-\exp\{-e^\eta\}
  \approx
  \frac{   e^\eta }{ 1 + e^\eta}
  \approx
  e^\eta.
  % 1-\exp\{-|A_i|e^{\alpha_k + \beta_k'x(s)}\}
  % \approx
  % \frac{   |A_i|e^{\alpha_k+\beta_k'x(s)} }{ 1 + |A_i|e^{\alpha_k+\beta_k'x(s)}}.
  % \approx
  % |A_i|e^{\alpha_k+\beta_k'x(s)}
\end{equation}\end{linenomath}
For simplicity assume quadrat sizes are constant and
work in units where $|A_i|=1$.  When this is not the case, $\log
|A_i|$ enters as an offset in the GLM for observation $i$.

Importantly, we make no assumption that the survey quadrats $A_i$ are
distributed evenly across $\D$ in any sense.  However, our model does
assume that, given the locations of $A_i$, the {\em responses}
$y_{ik}$ for the presence-absence data are in no way impacted by
$b(s)$, the sampling bias of the presence-only data.

Informally, if the $A_i$ tend to cluster near
some population center, then we will see many presences $y_{ik}=1$
{\em and} absences $y_{ik}=0$ there, so we will not be fooled into
believing the species is more prevalent there. Because we are only
modeling the distribution of $y_{ik}$, the
presence-absence data do not suffer from selection bias even if the
geographic distribution of quadrats is very uneven.

\subsection{Target Group Background Method}

\citet{phillips2009sample} suggested another method of using many
species' presence-only data to account for sampling bias.  Using
a discretization of $\D$ into grid cells, they propose
sampling background points only from grid cells where at least one
species was sighted, guaranteeing that completely inaccessible
areas play no role in estimation.  This method, dubbed the ``target-group background'' (TGB) method, can
tackle sampling bias with only presence-only data, and
without requiring specification of its functional form.

However, the TGB method does not distinguish between
inaccessible regions and regions in which all the species are not very
prevalent. Moreover, because it samples background points equally from
all accessible grid cells, the TGB method does not adjust for biased
sampling from one accessible region relative to another. Our method
can leverage presence-absence data to directly estimate sampling bias
and predict absolute prevalence. Section~\ref{sec:euca} empirically
compares our method's out-of-sample predictive performance to several
competitors including the TGB method.

\subsection{Maximum Likelihood Estimation}\label{sec:mle}

In this section we discuss estimation of our joint model. As we will
see, maximum likelihood estimation amounts to fitting a very large
generalized linear model to all of the data.
Moreover, several familiar methods for single-species distribution modeling
amount to exactly or approximately maximizing our model's likelihood
for a specific subset of our joint data set.

Because we have various sorts of observation sites $s_i$ we introduce
notation to allow for summing over relevant subsets of them.
Let $I_\PA$ denote the set of indices $i$ for which $s_i$ are
presence-absence survey quadrats, and let $I_{\PO_k}$ denote the
indices for presence-only sites $s_i\in \S_k$. Let $n_\PA$ be the
total number of survey quadrats.

For species $k$, the log-likelihood for the presence-absence data is
\begin{linenomath}\begin{equation}\label{eq:paLikEst}
  \ell_{k,\PA}(\alpha_k,\beta_k)
  = \sum_{i\in I_\PA}
  -y_{ik}\log
  \left(1-e^{-\exp\{\alpha_k+\beta_k'x_i\}}\right)+(1-
  y_{ik})\exp\{\alpha_k+\beta_k'x_i\}.
\end{equation}\end{linenomath}
If $\P(y_i=1)$ is small for each quadrat, then $\ell_{k,PA}$ is very
close to the log-likelihood for logistic regression on
presence-absence data. In other words, applying our method to a single
presence-absence data set with no other data reduces to something very
close to presence-absence logistic regression for that species.

The log-likelihood for the presence-only data is
\begin{linenomath}\begin{align}
  \ell_{k,\PO_k}&(\alpha_k,\beta_k,\gamma_k,\delta)
  = \sum_{i\in I_{\PO_k}} \log\left(\lambda_k\cdot b_k(s_i)\right)
  - \int_{\D}\lambda_k\cdot b_k(s)\,ds\\\label{eq:poLikEst}
  &= \sum_{i\in I_{\PO_k}} \left(\alpha_k + \beta_k'x_i +
    \gamma_k + \delta'z_i\right)
  - \int_{\D}e^{ \alpha_k + \beta_k'x_i +
    \gamma_k + \delta'z_i }\,ds
% \frac{|\D|}{n_{\text{BG}}}\sum_{i\in \text{BG}} e^{\alpha_k + \beta_k'x_i +
%     \gamma_k + \delta'z_i},
\end{align}\end{linenomath}
In general we cannot evaluate the integral in~(\ref{eq:poLikEst})
exactly. As usual, we replace the integral with a weighted sum over
$n_\BG$ background sites $s_i\in\D$. For weights $w_i$, we obtain the
numerical approximation
\begin{linenomath}\begin{equation}\label{eq:numLoglik}
  \ell_{k,\PO_k}(\alpha_k,\beta_k,\gamma_k,\delta) \approx
  \sum_{i\in I_{\PO_k}} \left(\alpha_k + \beta_k'x_i +
    \gamma_k + \delta'z_i\right)
  -\sum_{i\in I_\BG} w_i e^{\alpha_k + \beta_k'x_i +
    \gamma_k + \delta'z_i},
\end{equation}\end{linenomath}
where $I_\BG$ are the indices corresponding to background sites.
In the simplest case, the background sites are sampled uniformly from
$\D$ and all the $w_i=\frac{|\D|}{n_\BG}$, but other sampling schemes
are possible \citep[for a review of techniques see
][]{renner2014point}. Popular
procedures like Maxent and presence-background logistic regression
approximately maximize~(\ref{eq:numLoglik}).

Maximizing~(\ref{eq:numLoglik}) for a single species $k$ with the
$\gamma_k+\delta'z_i$ terms included reduces to the regression
adjustment strategy discussed in Section~\ref{sec:regadj}. If we do
not include $\gamma_k+\delta'z_i$ terms (i.e., if we assume there is
no bias) we obtain the unadjusted fit (i.e. the usual fit)
to the biased presence-only intensity $\lambda_k(s)\,b_k(s)$.
% As shown in~\citet{fithian2013finite}, we can fit (\ref{eq:numLoglik}) via
% infinitely weighted logistic regression (IWLR),
% using pseudo-responses $\tilde y_{ik}$
% which are 1 if $s_i\in \T_k$ and 0 if $s_i$ is a background point. In
% IWLR, the weights $w_i$ for background points are magnified by a large
% fraction.

The presence-absence and presence-only data sets for all $m$ species
together represent $2m$ independent data sets.\footnote{Technically, the
  portion of $\T_k$ that coincides
  with survey quadrats $A_i$ is not independent of the
  presence-absence data for species $k$. We could repair this by discarding all
  presence-only and background sites occurring in survey quadrats, but
  in practice this is unnecessary because the $A_i$ represent a
  miniscule fraction of the domain.}
Maximizing likelihood
for all the data means maximizing the sum
\begin{linenomath}\begin{equation}\label{eq:fullLoglik}
  \ell(\theta) = \sum_k \ell_{k,\PA}(\alpha_k,\beta_k) + \ell_{k,\PO}(\alpha_k,\beta_k,\gamma_k,\delta),
\end{equation}\end{linenomath}
where $\theta$ represents the full complement of coefficients
\begin{linenomath}\begin{equation}
  \theta =
  (\alpha_1,\beta_1,\gamma_1,\ldots,\alpha_m,\beta_m,\gamma_m,
  \delta).
\end{equation}\end{linenomath}
With a bit of work we can massage the form of~(\ref{eq:fullLoglik})
into one large GLM in terms of a common set of $m(p+2)+r$ predictors
corresponding to the entries of $\theta$. We do so by introducing
auxiliary predictor variables $u_k$, a binary indicator that we are
predicting for species $k$, and $v$, an indicator that we are
predicting for presence-only instead of presence-absence data. In
terms of these variables, $\alpha_k$ is the coefficient for $u_k$,
$\beta_{k,j}$ for $u_k x_j$, $\gamma_{k}$ for $u_k v$, and $\delta_j$ for
$v z_j$. More details are given in Appendix~\ref{secComp}.

The result is a very large GLM with $m(p+2)+r$ total
parameters and $m(n_\BG + n_\PA)$ total observations (one per species
for each survey site and background site). Because
both the number of observations and number of parameters scale
linearly with $m$, the computational cost of standard approaches to
estimation scales as $m^3p^2(n_\BG+n_\PA)$.

For our eucalypt example, we have $m=36$ species,
$n_\BG=40,000$ background sites, $n_\PA=32,612$ survey quadrats, and
$p=38$ predictors (including interactions and nonlinear terms), so
$m^3p^2(n_\BG+n_\PA)\approx 5\times 10^{12}$. This is a very
high computational load even for modern computers.

Fortunately, there is a great deal of structure in the design matrix,
and if we exploit it properly, our computations need only scale
linearly with $m$, cutting the cost by a factor of roughly
$36^2\approx 1000$. Appendix~\ref{secComp} also details our efficient
computing scheme.

\subsection{Fitting Proportional Bias Models in R}
\newcommand{\mytilde}{\raise.17ex\hbox{$\scriptstyle\mathtt{\sim}$}}
%\newcommand{\mytilde}{\raise.17ex\hbox{$\mathtt{\sim}$}}

% For larger data sets like the one we analyze in this article, packages
% not specially tailored to exploit the block structure of the design
% may have prohibitive computation costs.
As a companion to this article, we have released an R package,
\texttt{multispeciesPP}, that can efficiently fit the models described here.
The method requires formulae for the species intensity and the
sampling bias, and carries out maximum likelihood as described in
Section~\ref{sec:mle}.  For example, the code
\begin{quote}
  \texttt{mod <- multispeciesPP(\mytilde x1 + x2, \mytilde z, PA=PA, PO=PO, BG=BG)}
\end{quote}
would fit a multispecies Poisson process model with presence-absence
data set \texttt{PA}, list of presence-only data sets \texttt{PO}, and
background data \texttt{BG}.  The R function maximizes likelihood
under the model
\begin{linenomath}\begin{align}\label{eq:logLinAssumption}
  \log \lambda_k(s_i) &= \alpha_k + \beta_{k,1} x_{i,1} + \beta_{k,2}x_{i,2}\\\label{eq:propBiasAssumption}
  \log b_k(s_i) &= \gamma_k + \delta z_i
\end{align}\end{linenomath}
and returns fitted coefficients, and predictions.

\section{Simulation}\label{sec:toysim}

Thus far, we have discussed several distinct data sources we can
bring to bear on estimating $\lambda_k(s)$, the intensity for
the $k$th species process.
A simple simulation illustrates the interplay of the different
data types.

We simulate from the model \eqref{eq:loglinbias} with covariates
$(x_1,x_2,z)$ following a trivariate
normal distribution with mean zero and covariance
\begin{linenomath}\begin{equation}
  \Cov(x_1,x_2,z) = \begin{pmatrix} 1 & 0 & 0.95 \\ 0 & 1 & 0 \\ 0.95 &
    0 & 1\end{pmatrix},
\end{equation}\end{linenomath}
and the coefficients for species 1 equal to:
\begin{linenomath}\begin{equation}
  (\alpha_1, \beta_{1,1}, \beta_{1,2}, \gamma_1, \delta) = (-2, 1, -0.5, -4, -0.3)
\end{equation}\end{linenomath}
Presence-absence data for species 1
is the most reliable reflection of $\lambda_1(s)$, but is
available only in small quantities.
Presence-only data for species 1 are
abundant, but biased, as they are sampled from the intensity
\begin{linenomath}\begin{equation}
  \lambda_1(s)\cdot b_1(s) = \alpha_1 + \beta_1' x(s) + \gamma_1 + \delta' z(s)
\end{equation}\end{linenomath}
Because $z$ is independent of $x_1$ but highly correlated with $x_2$,
a presence-only data point is mainly informative about $\beta_{1,1}$ and
$\beta_{1,2}+\delta$. Without supplementary data it carries almost
no information about $\beta_{1,2}$ itself.

If presence-only and presence-absence
data are available for many other species, then they all
contribute information helping us to precisely estimate $\delta$.
This makes species 1's presence-only data much more useful: given a
precise estimate of $\delta$ from other species' data, information
about $\beta_{1,2}+\delta$ is equivalent to information about
$\beta_{1,2}$.

Figure~\ref{fig:confEll} and the accompanying commentary
shows what each data set contributes to
estimating $\beta_{1,1}$ and $\beta_{1,2}$ by plotting the
95\% Wald confidence ellipse for each of several models.

\begin{figure}
%\begin{mdframed}%[leftmargin=.5in,rightmargin=.5in]
  \centering
  \vskip-5em
  \includegraphics[width=\textwidth]{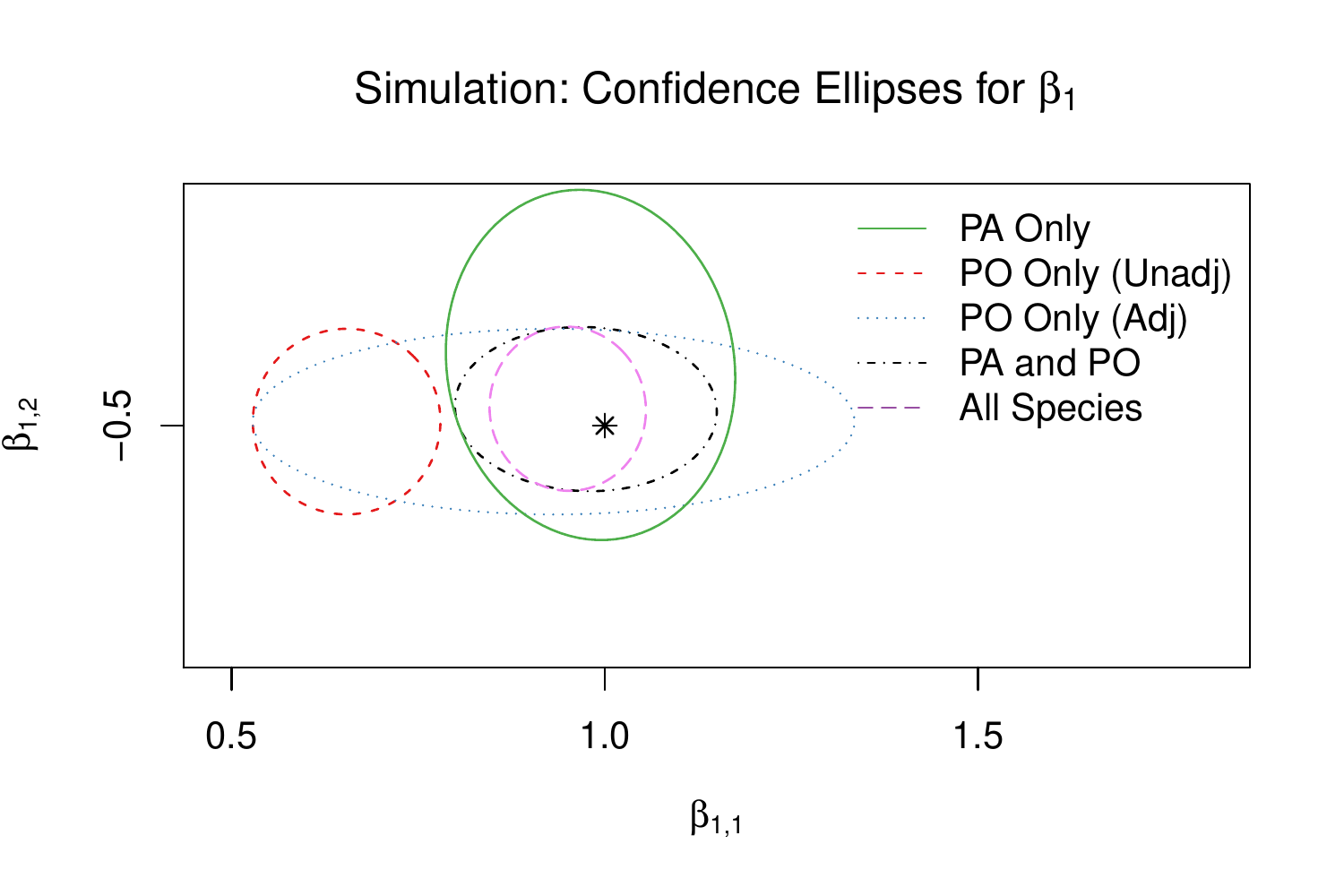}
  \vskip-2em
  \caption{95\% Wald confidence regions for $\beta_1$, the species
    distribution coefficients for species 1, obtained by
    using five different methods. The plot illustrates the precision
    and accuracy with which the coefficients are estimated by each
    method.  The black star denotes the true values
    of the parameters of interest.  The different model types are
    described below:}
  \label{fig:confEll}
    \begin{description}
    \item[PA data alone (Green):] The most straightforward method when
      PA data for species 1 is to maximize likelihood
      for it alone.  Our estimates of both coefficients are
      unbiased but less precise than they could be.
      $z$ plays no role in the PA data or our model for it, so the
      precisions for the two coordinates of $\beta_1$ are about the same.
    \item[PO data alone, no regression adjustment (Red):]
      The most common use of presence-only data is to
      maximize likelihood using only the presence-only data for
      species 1, making no adjustment for sampling bias.
      In that case, we are effectively estimating the presence-only
      intensity instead of the species intensity.  Here $x_1$ proxies
      for the confounding variable
      $z$ and $\beta_{1,2}$ is severely biased, whereas $\beta_{1,2}$
      is unaffected.
    \item[PO data alone, with regression adjustment (Blue):] We can
      address sampling bias by attempting to estimate the effect of
      the confounder $z$. Our estimates are now unbiased,
      but $\hat\beta_{1,1}$ is noisy and its interval is very wide.
      It is quite hard to tease apart the effects of $x_1$ and $z$
      given only PO data.
    \item[PA and PO data for species 1 (Black):] The PO
      data carry solid information about $\beta_{1,2}$, whereas
      the PA data carry the only usable information about $\beta_{1,1}$.
      When we combine both data sources for species 1, the precision
      of $\hat\beta_{1,2}$ roughly matches the methods using PO
      alone (blue and red), and the precision of $\hat\beta_{1,1}$
      matches the method using PA alone (green).
    \item[Pooled data for all species (Purple):] We obtain the best
      results by pooling both presence-absence and presence-only data sets
      for many different species.  Species $2,3,\ldots,m$ all
      contribute to estimating $\delta$ to high precision.  As a
      result the presence-only data for species 1 becomes much more
      useful for estimating $\beta_{1,1}$, because we know how to
      correct for the sampling bias.
    \end{description}
%\end{mdframed}
\end{figure}

\section{Eucalypt Data}\label{sec:euca}

We have just seen how the various sources of data
can work in concert to give far more precise estimates than we could
obtain from any one data set by itself.  Additionally, we evaluate our
model's performance on a  data set of 36 species of genera
{\em Eucalyptus, Corymbia}, and {\em Angophora} in southeastern Australia.

The presence-absence data consist of 32,612 sites where all the species
were surveyed, with an average of 547 presences per species.  The
species exhibit
a great deal of variability with respect to their overall abundance,
with 4 species having fewer than 20 total observations, and 8 having
more than 1000.

The presence-only data consist of 764 observations on average
per species, supplemented with 40,000 background points sampled
uniformly at random from the study region.  More information on data
sources may be found in Appendix~C. The rarest species in the
presence-only data, {\em Eucalyptus stenostoma}, has 90 observations.

We use 15 environmental covariates in
our model for the species process, allowing for nonlinear effects in 4
of them: temperature seasonality, rainfall seasonality, precipitation
in June/July/August, moisture index in the lowest quarter, and annual
precipitation overall.  Our model for the bias includes nonlinear effects for
predictors including distance to road, distance to the nearest town,
distance to the coast, ruggedness, whether the locale has extant
vegetation, and the number of presence-absence sites
nearby. Appendix~\ref{sec:moreResults} discusses the
model form in more detail.

The four panels of Figure~\ref{fig:eucapunc} contrast our model's fit for
a single species, {\em Eucalyptus punctata}, with the fit that we
would obtain by using presence-only data alone with no bias
adjustment. A satellite image of
the same region provided for comparison and orientation. The top left
panel displays the fitted intensity we obtain by modeling {\em
  E. punctata}'s presence-only data as an IPP whose intensity is
driven by environmental variables.  We obtain an estimate of the
presence-only intensity, which in this case is concentrated mostly
near Sydney and the coast.

The top right and lower left panels show our model's estimates $\hat
b_k(s)$ of the bias and $\hat \lambda_k(s)$ of the species intensity.
Unsurprisingly, distance from the coast, and from Sydney, are strong
drivers of our model's fitted sampling bias.
In the lower left panel, the intensity is shifted significantly toward
the western hinterland.

\begin{figure}
  \centering
  \includegraphics[width=1.2\textwidth]{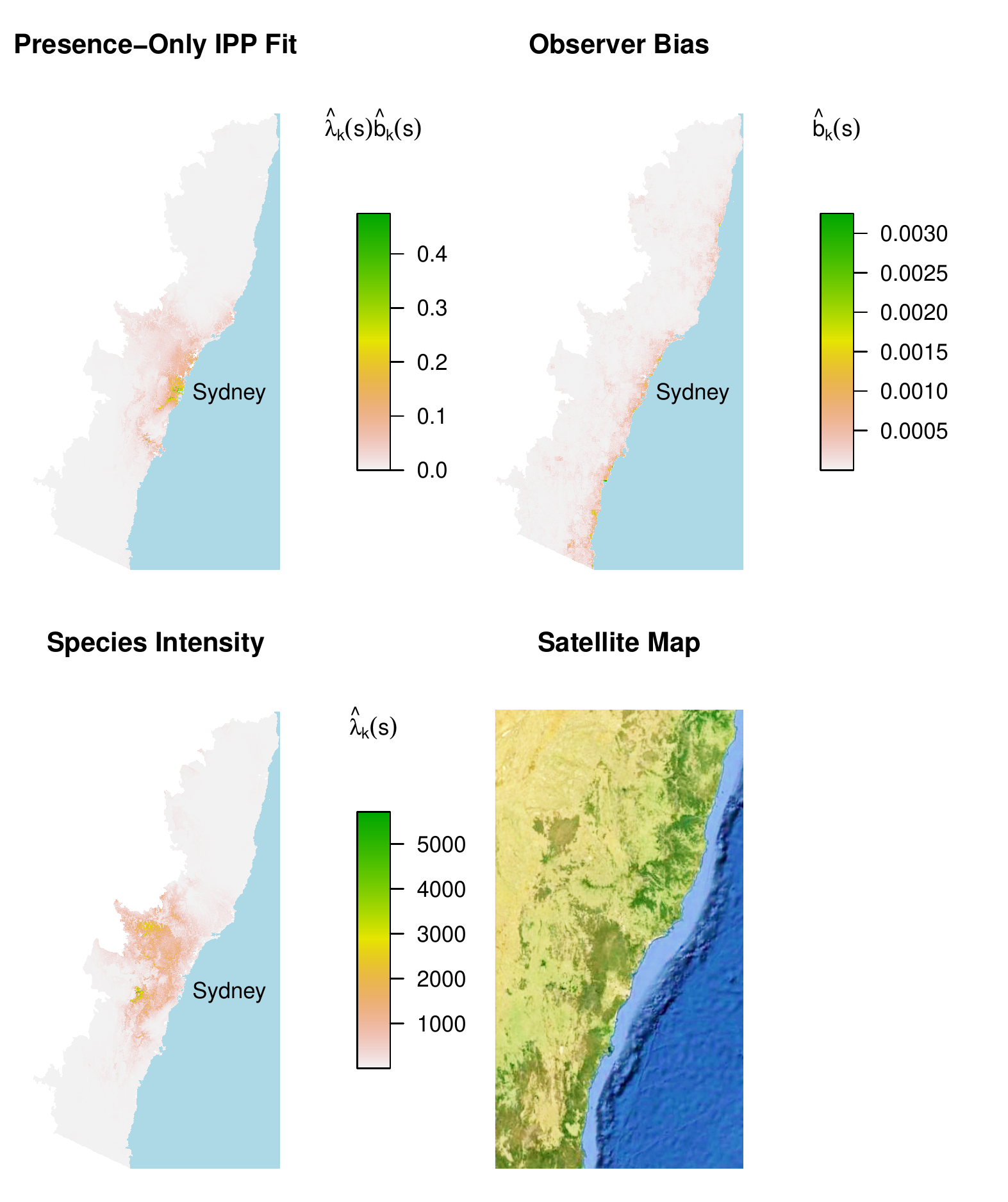}
  \caption{Model fits for {\em Eucalyptus punctata} in southeastern Australia.
    Top left panel: estimate of presence-only intensity in units of $1/\text{km}^2$,
    using presence-only data alone and making no
    adjustment for bias. Top right: fitted sampling bias $\hat b_k(s)$
    in our proportional sampling bias model. Lower left: fitted species
    intensity $\hat \lambda_k(s)$ for our model, in units of
    $1/\text{km}^2$.  Lower right: satellite image from
    Google Earth. In the presence-only data, many more trees were
    observed in near Sydney than in the western hinterland, but our
    model infers a higher intensity in the undersampled western region.}
  \label{fig:eucapunc}
\end{figure}

To evaluate our model quantitatively, we ask two questions: first, how
well do the data agree with the assumption of proportional sampling bias?
Second, do we obtain better predictions when pooling multiple
data sets across multiple species?

\subsection{Checking the Proportional Bias Assumption}\label{sec:check}

We can check the proportional bias assumption within the context of
our GLM.  To check whether the bias coefficient corresponding to some
$z_j$ should vary by species, we can estimate the same model as
before, but now allowing that coordinate of $\delta$ to vary by
species.

In terms of the large GLM described in
Section~\ref{sec:mle}, we can estimate our model as before by
augmenting the design matrix with interactions between the species
identifiers $u_k$ and the bias variable $z_j$. These variables then
have coefficients $\delta_{k,j}$. In this model the proportional bias
assumption corresponds to the null hypothesis of
no interaction effects, which we can test using standard
likelihood-based methods.

As usual, it is rather unlikely that the proportional bias assumption
--- or any other aspect of our model --- holds exactly.  Even if the
assumption holds for some true functions $\lambda_k(s)$ and $b_k(s)$,
we may still see spurious correlations when we fit a complex model
using a misspecified log-linear functional form.  Nevertheless, it is
of interest to identify whether some interactions stand out
strongly compared to the noise level, and if so how large they are.

% To test our assumption, we can estimate the model with parameters
% $\delta_k = d_k + \bar \delta$, with the constraint that
% $\sum d_k = 0$. Under this parameterization, the
% proportional bias assumption corresponds to the null hypothesis
% $H_0:\,d_k = 0$.  If we construct confidence intervals for $d_k$ that do not
% include 0, we would say that the data reject the proportional bias
% assumption.

Because of spatial autocorrelation in both the presence-absence and
presence-only data, traditional likelihood-based confidence intervals
for the interaction effects $\delta_{k,j}$ are likely to be
anticonservative, as are bootstrap intervals based on i.i.d. resampling.
To account properly for the spatial autocorrelation, we use the {\em block
bootstrap} to compute confidence intervals for the coefficients \citep{efron1993introduction}.  We
separate the landscape
into a checkerboard pattern with 261 rectangular regions with sides of length
1/3-degree of longitude and latitude (approximately 31 km $\times$ 37
km at latitude $33^\circ$ South). In each of 400 bootstrap replicates we
resample 261 whole regions with replacement.

% Even if the proportional bias assumption truly held, we might still
% reject $H_0$ above if the model is otherwise misspecified.  For
% example, suppose the bias model needs an interaction between latitude and
% another covariate of $z$ such as ruggedness, but we fail to include
% that interaction in the model.  Then if species $k$ occurs mainly in
% the Northern part of the region, and species $k'$ occurs in the South,
% it may appear to our model that $b_k$ depends more or less on
% ruggedness than $b_{k'}$ does.

\paragraph{Dependence of $\delta$ on Species:}

We test our assumption explicitly for the variable ``distance to
coast,'' which is the most important predictor of bias.
The evidence in the data regarding our assumption is somewhat mixed,
but on the whole it does not appear that the proportional bias model
fits the data perfectly.  For some species, there is sufficient
evidence to reject $H_0$.

Figure~\ref{fig:interact} shows the 95\% bootstrap confidence interval
for the idiosyncratic sampling bias of {\em Eucalyptus punctata}, as a
function of distance to coast.  We see that, even after accounting for the
overall bias that affects the other 35 species, we still have too many coastal
presence-only observations of {\em punctata}.  This could be linked to
the fact that the {\em punctata} data are concentrated near Sydney,
which is more heavily populated than other coastal regions, but with
many confounding factors at play it is hard to know.
Appendix~\ref{sec:moreResults} has more detailed results for more
species.

If interactions like these are strong, we can allow some of the
coordinates of $\delta$ to vary by $k$ and others not.  There is a
bias-variance tradeoff, however, since the proportional-bias
assumption is what allows us to share information across species.
We will see in Section~\ref{sec:cv} that even when the model is an
imperfect fit, it can nevertheless substantially improve
predictive performance on held-out presence-absence data.

\begin{figure}
  \centering
  \includegraphics[width=.9\textwidth]{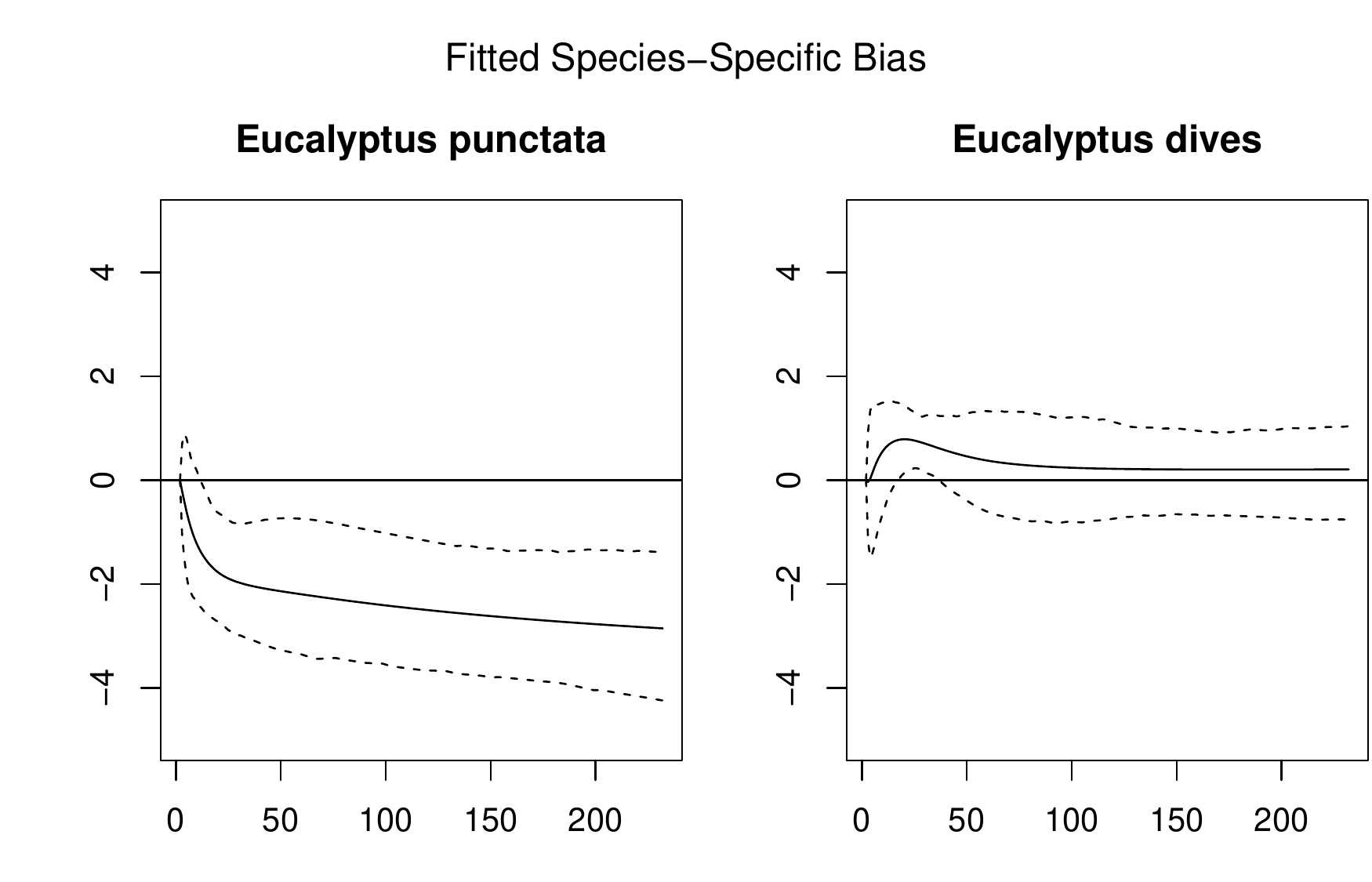}
  \caption{Idiosyncratic sampling bias for {\em E. punctata} and {\em
      E. dives} as a function of distance to coast in km.
    The dashed lines show 95\%
    block-bootstrap confidence intervals.  It appears that after adjusting
    for the bias $\delta'z(s)$ that is shared across all species,
    there is some residual bias left over for {\em punctata}. By
    contrast, for {\em E. dives}, there is no
    significant interaction.  Even though the proportional
    sampling bias model is misspecified for {\em E. punctata}, it
    still substantially improves out-of-sample predictive accuracy, as
    we will see in Section~\ref{sec:cv}. The corresponding curves for
    all the species can be
    found in Appendix~\ref{sec:moreResults}.}
  \label{fig:interact}
\end{figure}

\paragraph{Dependence of $\gamma$ on Species:}

By default, our model allows $\gamma$ to vary by species, but we need
not always do so.  In fact, if we assumed $\gamma$ does not vary by
species, then we would only need joint presence-absence and presence-only
data for one species to obtain an estimate for $\gamma$.  Therefore, we
could estimate abundance (and therefore presence
probabilities) for {\em every} species given presence-absence and
presence-only data for a single species and presence-only data for
every other species.

Define {\em relative sampling effort} as the ratio
\begin{linenomath}\begin{equation}
  \rho_k = \frac{\exp\{\gamma_k\}}{\displaystyle\min_{k'=1}^m \exp\{\gamma_{k'}\}},
\end{equation}\end{linenomath}
so that $\rho_k=1$ for all $k$ if and only if the $\gamma_k$ are all equal.

Figure~\ref{fig:samplingEffort} shows our model's estimates $\hat\rho_k$,
plotted against the total number of presence-absence observations.
For the eucalypt data it appears that the assumption of a
common $\gamma$ for every species is probably not reasonable.
It appears the presence-only intercept $\gamma$
varies systematically by species, with effort being substantially
higher for the rarer species. Thus, the data appear to support our
decision to allow $\gamma$ to vary by species.

\begin{figure}
  \centering
  \includegraphics[width=.6\textwidth]{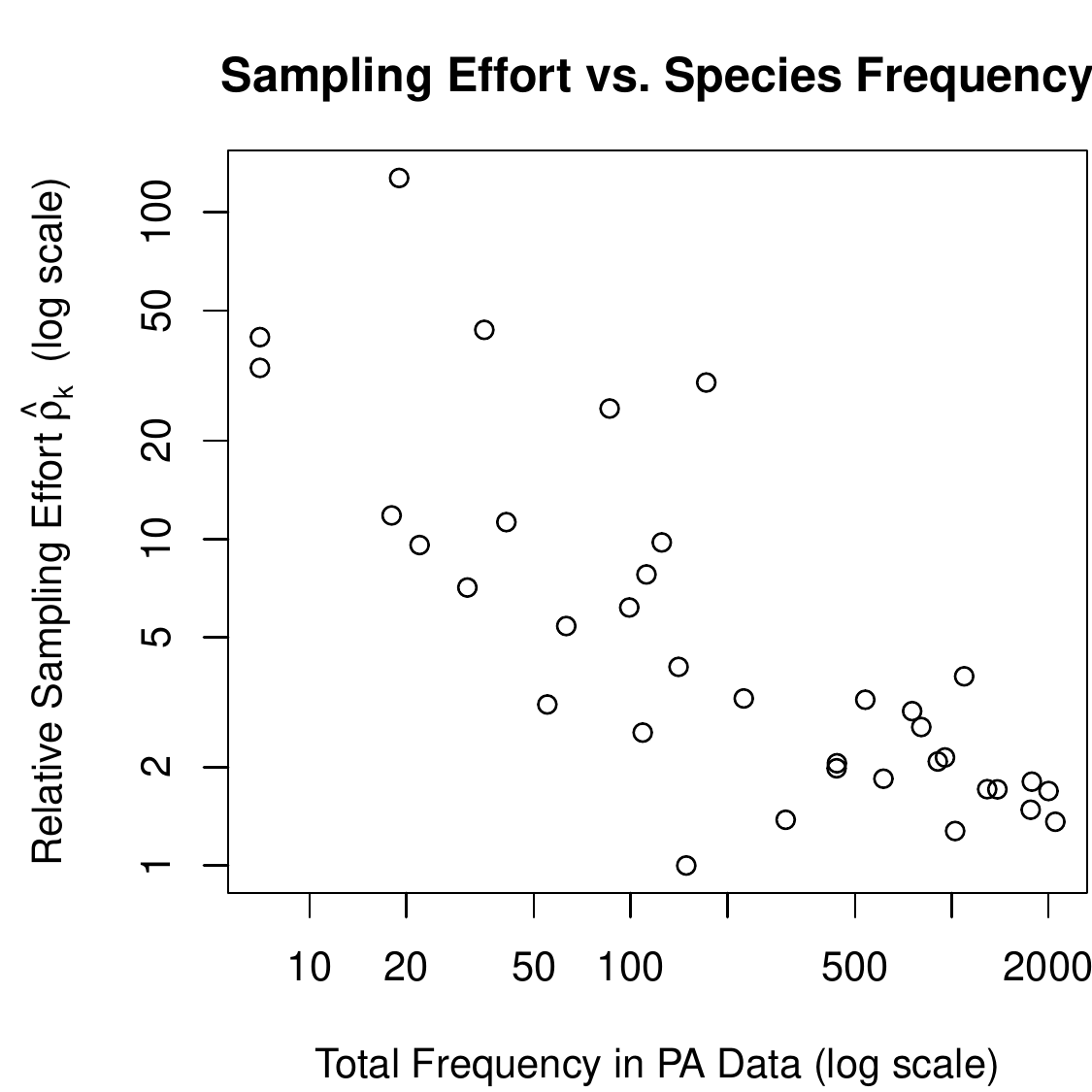}
  \caption{Our model's estimate of relative sampling effort
    $\rho_k$, plotted
    versus the total abundance of each species, with each variable
    plotted on a log scale.  It appears that more
    effort is made to sample rare species.}
  \label{fig:samplingEffort}
\end{figure}

\subsection{Predictive Evaluation of the Model}\label{sec:cv}

Our goal in pooling data was to supplement the presence-absence data
for a given species with multiple other more abundant
sources of data, to allow for
more efficient estimation of the species intensity $\lambda_k(s)$ and
its coefficients.  One measure of our success
is whether this data pooling actually improves predictive
performance on held-out presence-absence data.

For comparison, we also estimate our joint
model using a) both the presence-only and presence-absence data for
species $k$, and b) presence-only and presence-absence data for all 36
species combined.

Note that in all three cases we are estimating the exact same joint
model with three nested data sets:
\begin{description}
\item[PA data alone for species $k$:] The most natural
  competitor to our method is
  to fit the Bernoulli complementary log-log GLM model with the same
  predictors, but only on species $k$'s presence-absence data.
  This is a special case of the joint method, for which only
  presence-absence data are available for species $k$.
\item[PA and PO data for species $k$:] Augmenting
  the presence-absence data with presence-only data for the same
  species improves our coefficient estimates for environmental
  variables that are independent of sampling bias.\\
  When there is no presence-absence data, we are fitting the thinned Poisson process model
  to PO data alone. This is
  regression-adjusted analysis of PO data, discussed in Section~\ref{sec:regadj}.
\item[Pooled data for all species:] Using data for all species gives
  better estimates of the predictors that are badly confounded by
  sampling bias.
\end{description}

In addition we introduce two more competitors that use presence-only
data alone:
\begin{description}
\item[PO data alone for species $k$, unadjusted for bias:] Using
  species $k$'s presence-only data alone, and ignoring sampling bias,
  is the most common method for analyzing presence-only data.  It
  estimates the presence-only intensity, and then makes predictions as
  though that were the same as the species intensity.  This method can
  suffer dramatically from bias.
\item[PO data for all species, using the TGB method:] We implement the
  TGB method with pixel size 9 arc seconds (the resolution level of
  our covariates).
\end{description}

Our evaluation method effectively
treats the presence-absence data as a ``gold standard,'' unaffected by
bias.  This point of view may not always be reasonable, but
eucalypts are relatively large and hard for surveyors to miss, so the
presence-absence data probably do reflect the true presence or absence
of trees in their respective quadrats, notwithstanding identification
errors.

We emphasize that we are comparing the different methods with respect
to their performance on held-out presence-absence data and not on
held-out presence-only data. This distinction is important, because
our goal is to reconstruct the species intensity and not the
presence-only intensity. All three methods train on the
same amount of presence-absence data for species $k$. The
data-pooling methods can only beat the simpler method if the other
data sets carry useful information about the species intensity of
species $k$, and if our joint model effectively processes that
information without biasing our estimate too badly.

We then use ten-fold block cross-validation to evaluate each method with
respect to its predictive log-likelihood.  Using the same rectangular
regions as in Section~\ref{sec:check}, we randomly assign the 261 whole regions to
ten folds, with each fold containing 26 random regions and the one
left-over region excluded.  Figure~\ref{fig:blockCV} shows one training-test
split used for our procedure.  Importantly, {\em all} data taken from the
test region --- presence-absence, presence-only, and background --- is
held out of the training set.

The gains from data pooling are greatest when the presence-absence data
for a species of particular interest (say, species $k$) are either
scarce or non-existent.  To emulate
estimation with presence-absence data sets ranging from scarce to
abundant, we further downsampled the presence-absence training data
for species $k$.

We fit all the models with a ridge penalty on all of the coefficients except the intercepts
$\alpha$ and $\gamma$.  That is, we minimize
\begin{linenomath}\begin{equation}
  \ell(\alpha,\beta,\gamma,\delta) + \frac{\nu}{2}\|\beta\|_2^2 +
  \frac{\nu}{2}\|\delta\|_2^2,
\end{equation}\end{linenomath}
with penalty multiplier $\nu = 100$.  Penalizing the
coefficients in this way is known as {\em regularization}, and it
allows for efficient estimation of parameters in complex models.
For more details see e.g. \citet{ESL}.

Figures~\ref{fig:cvEucapunc} and \ref{fig:cvEucadive} show the results of block
cross-validation for two species in the data set: {\em Eucalyptus
  punctata} and {\em Eucalyptus
  dives}.  Results for the other species are qualitatively similar and
can be found in Appendix~\ref{sec:moreResults}.  We evaluate the
various methods according to two metrics of predictive performance:
predictive log-likelihood (left panel) and area under the predictive
ROC curve, averaged over the ten test folds (AUC, right panel).
\citet{lawson2014prevalence} contrast {\em prevalence-dependent}
metrics like log-likelihood, which measure the accuracy of absolute
out-of-sample presence probabilities, with {\em prevalence-independent} metrics
like AUC, which depend only on the ordering of predictions.

Doing well in predictive log-likelihood requires a good estimate of
the intercept $\alpha_k$ --- i.e., of the {\em absolute} intensity $\lambda_k(s)$.
Because $\alpha_k$ is confounded with
$\gamma_k$ in presence-only data, and because $\gamma_k$ varies by
species, the two data-pooling methods cannot estimate {\em absolute}
intensities without a little presence-absence data from species $k$.
By contrast,
AUC only depends on estimates of {\em relative} intensity
$\frac{\lambda_k(s)}{\Lambda_k(\D)}$, which is
invariant to $\hat\alpha_k$ and can be estimated with {\em no}
presence-absence data for species $k$.  Estimates without any
presence-absence data for species $k$ are shown above the label ``0''
on the horizontal axis.

As we have seen in Figure~\ref{fig:eucapunc}, {\em E. punctata}
suffers dramatically from sampling bias because Sydney, the largest
city, lies on the eastern edge of its habitable zone.  As a result,
the unadjusted presence-only method performs very poorly compared to
the methods that account for bias.  By contrast, the habitable zone of {\em
  E. dives} lies mainly in the western part of the study region where
the sampling bias function $\log b_k$ has a much gentler gradient.
As a result, the unadjusted presence-only analysis does relatively
well. The method that pools across all 36 species does even better:
its AUC using {\em none} of {\em E. punctata}'s presence-absence data
(and only the presence-absence data for the other 35 species) is
indistinguishable from its AUC using {\em all} of the presence-absence
data. See Appendix~\ref{sec:moreResults}
for the corresponding plots for all species.

Table~\ref{tab:cvRes} compares the four best methods using a moderate value, 1000, for the number of non-missing presence-absence sites. Our method pooling presence-absence and presence-only data for all species performs well consistently, coming within 0.01 of the best method for all but one species. Interestingly, the TGB method performs second best despite its having no access to the presence-absence data.

\begin{figure}
  \centering
  \includegraphics[width=.6\textwidth]{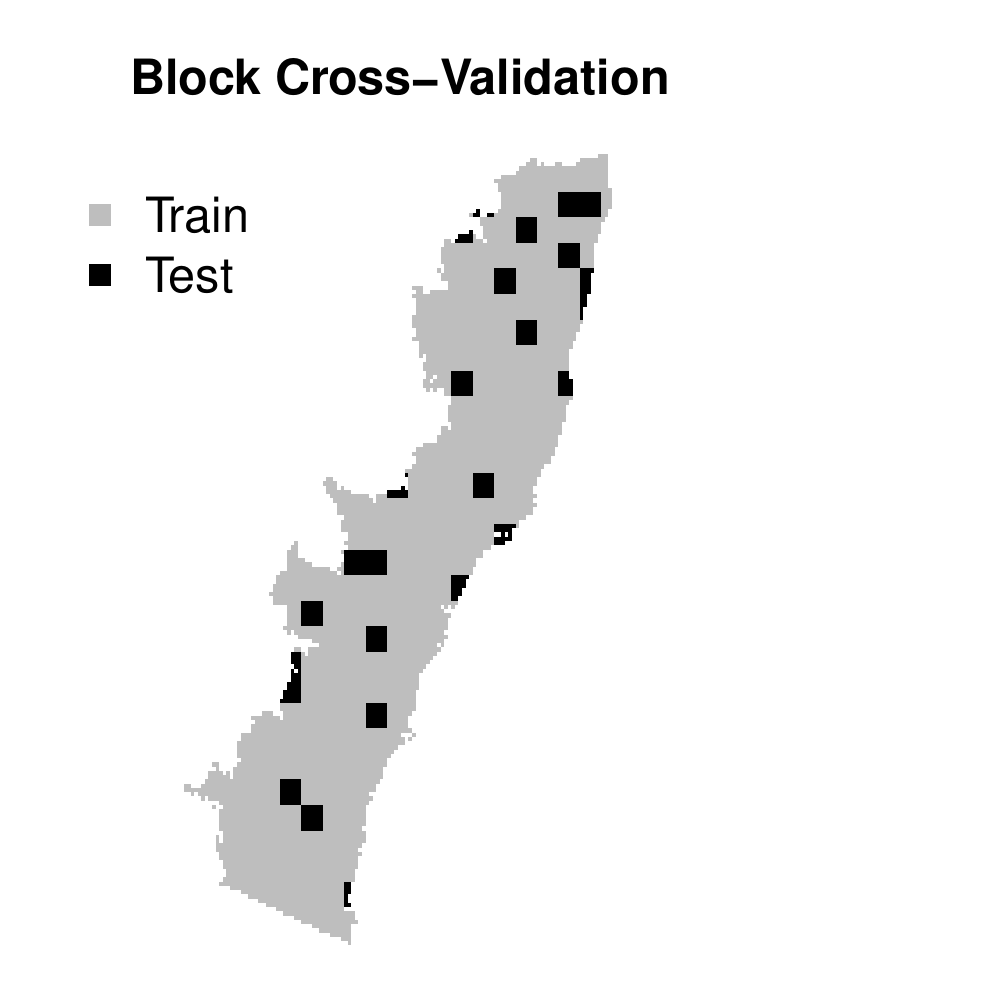}
  \caption{Depiction of our block cross-validation scheme for the eucalypt data.  Entire rectangular blocks are sampled together to help account for spatial autocorrelation.}
  \label{fig:blockCV}
\end{figure}

\begin{figure}
  \begin{center}
  \includegraphics[width=\textwidth]{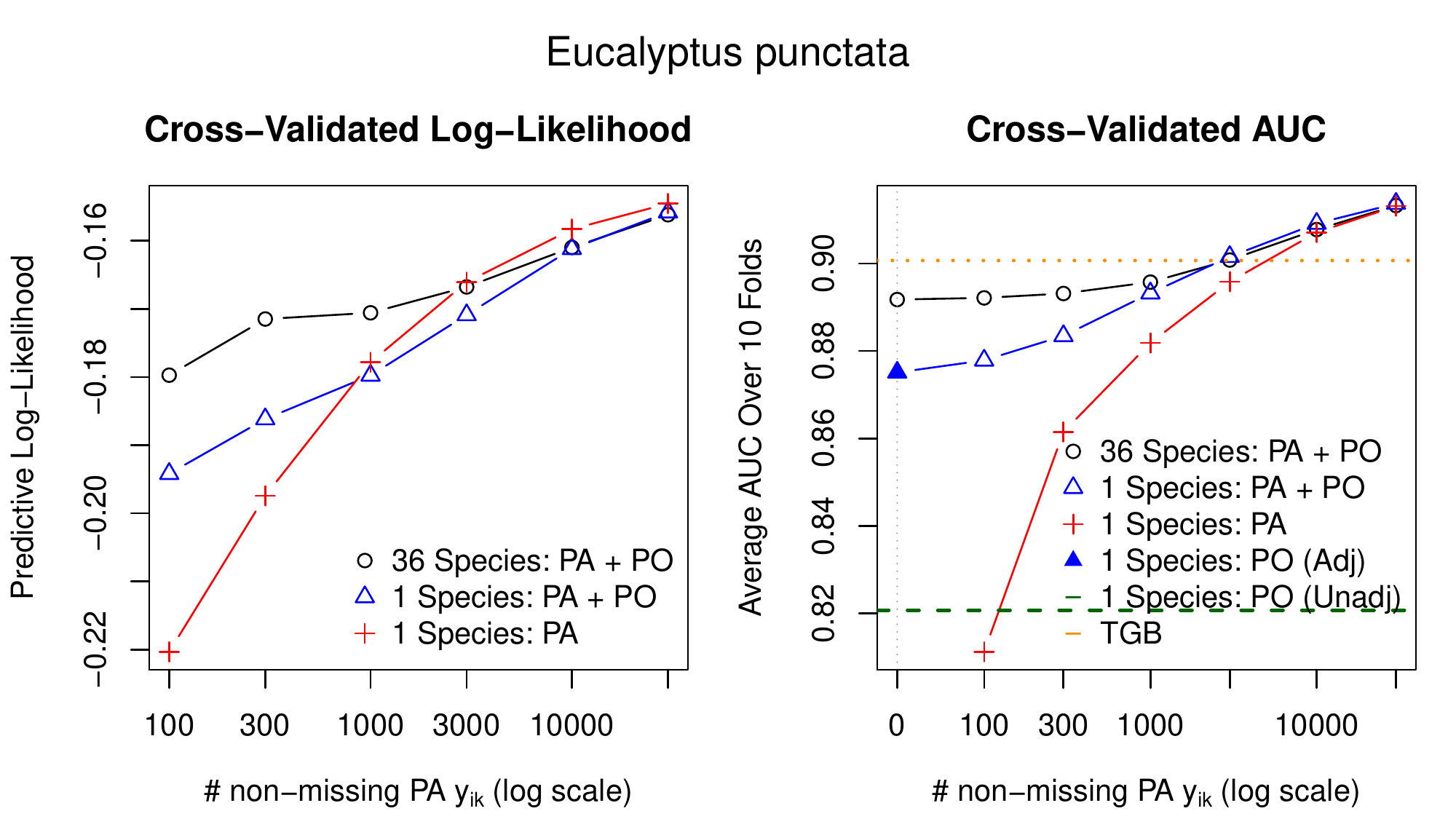}
  \caption{Block cross-validated log-likelihood and AUC for
   {\em E. punctata} (higher is better). Pooling data from other sources gives a
    substantial boost to predictive performance when the
    presence-absence data set is small, but only when we make an
    adjustment for the bias.}
  \label{fig:cvEucapunc}
  \end{center}
  In the right panel, the leftmost blue
  triangle (``1 species: PA+PO'' with no PA data),
  we are fitting the thinned IPP model
  to PO data alone. This is the regression adjustment strategy
  discussed in Section~\ref{sec:regadj}. Note that using
  presence-only data without any adjustment for bias performs quite
  poorly compared to the other methods.  Because the habitable zone for {\em
    E. punctata} includes Sydney as well as more inaccessible
  regions to its west, ignoring the sampling bias can wreak
  havoc on our estimates.
\end{figure}

\begin{figure}
  \begin{center}
  \includegraphics[width=\textwidth]{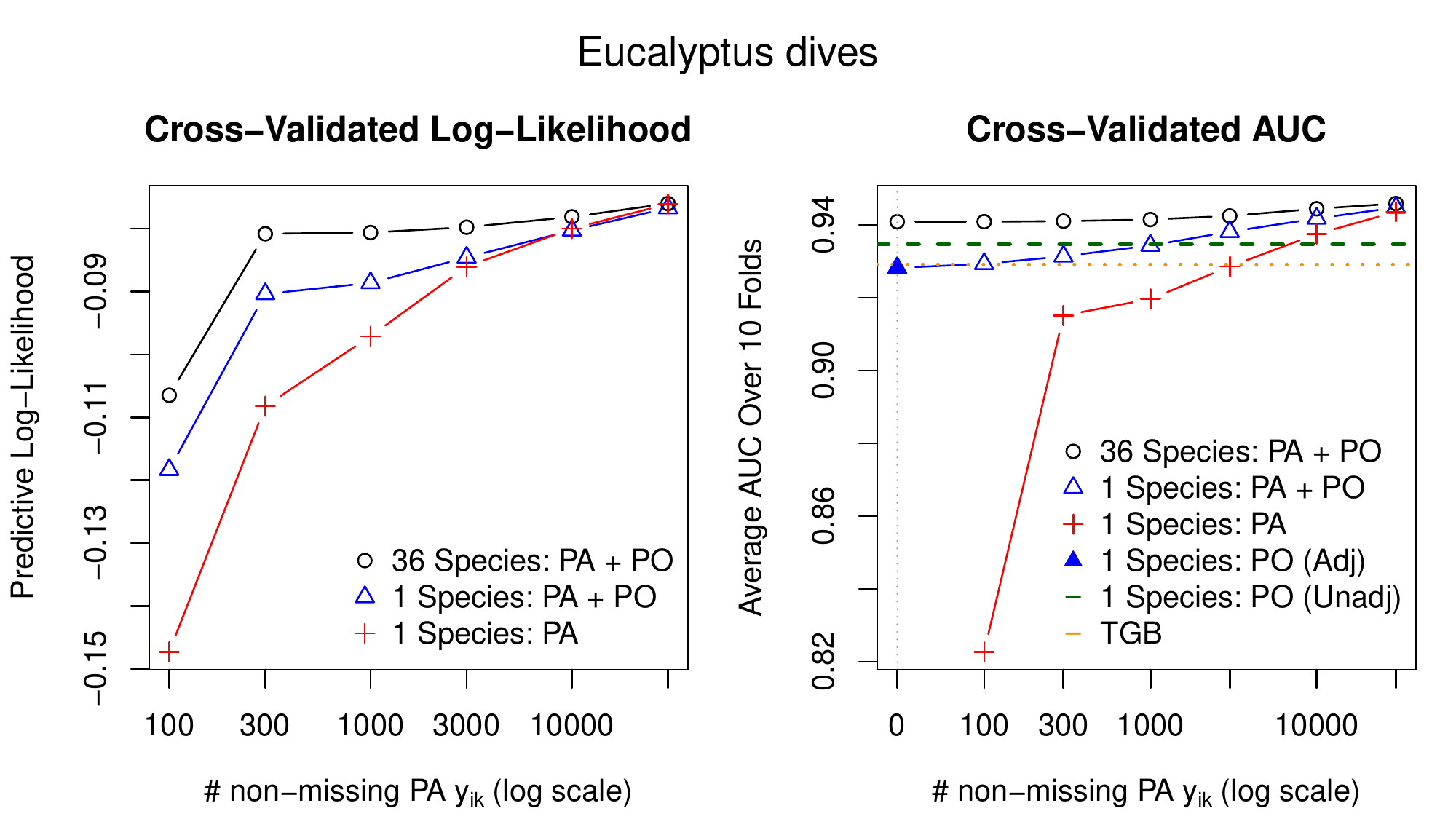}
  \caption{Block cross-validated log-likelihood and cross-valid AUC for the species
    {\em E. dives} (higher is better).  Pooling data from other sources gives a
    substantial boost to predictive performance when the
    presence-absence data set is small.  Because {\em E. dives} occurs in the
    southwestern part of the study region, where the bias function has a
    relatively gentle gradient, the sampling bias plays a less vital
    role.}
  \label{fig:cvEucadive}
\end{center}
    In the right panel, the leftmost blue
    triangle (``1 species: PA+PO'' with no PA data),
    we are fitting the thinned IPP model
    to PO data alone. This is the regression adjustment strategy
    discussed in Section~\ref{sec:regadj}.
\end{figure}

% latex table generated in R 3.0.2 by xtable 1.7-1 package
% Tue May 27 23:16:11 2014
\begin{table}[ht]\label{tab:cvRes}
\centering
\begin{tabular}{l|llll}
  \hline
 & PA Only & PA + PO & PA + PO & TGB \\
 & 1 Species & 1 Species & 36 Species & 36 Species \\
  \hline
A. bakeri & 0.893 & 0.915 & \textbf{0.932} & \textbf{0.933} \\
  C. eximia & 0.921 & \textbf{0.947} & \textbf{0.952} & \textbf{0.952} \\
  C. maculata & \textbf{0.783} & \textbf{0.778} & \textbf{0.785} & 0.742 \\
  E. agglomerata & 0.801 & \textbf{0.834} & 0.820 & 0.808 \\
  E. blaxlandii & 0.904 & 0.934 & \textbf{0.944} & 0.934 \\
  E. cypellocarpa & \textbf{0.861} & 0.852 & \textbf{0.867} & 0.825 \\
  E. dalrympleana (S) & 0.873 & 0.910 & \textbf{0.926} & \textbf{0.931} \\
  E. deanei & 0.811 & 0.855 & \textbf{0.906} & 0.894 \\
  E. delegatensis & 0.971 & 0.971 & \textbf{0.981} & \textbf{0.982} \\
  E. dives & 0.920 & \textbf{0.934} & \textbf{0.941} & 0.929 \\
  E. fastigata & 0.905 & 0.900 & \textbf{0.916} & \textbf{0.907} \\
  E. fraxinoides & 0.920 & 0.935 & \textbf{0.963} & \textbf{0.963} \\
  E. moluccana & 0.881 & \textbf{0.909} & \textbf{0.911} & 0.881 \\
  E. obliqua & 0.870 & \textbf{0.914} & \textbf{0.918} & 0.906 \\
  E. pauciflora & 0.874 & 0.897 & \textbf{0.928} & \textbf{0.928} \\
  E. pilularis & \textbf{0.807} & \textbf{0.807} & \textbf{0.805} & \textbf{0.811} \\
  E. piperita & \textbf{0.889} & 0.844 & \textbf{0.886} & 0.871 \\
  E. punctata & 0.882 & \textbf{0.893} & \textbf{0.896} & \textbf{0.901} \\
  E. quadrangulata & \textbf{0.835} & \textbf{0.843} & \textbf{0.840} & 0.823 \\
  E. robusta & 0.878 & 0.883 & \textbf{0.892} & \textbf{0.894} \\
  E. rossii & \textbf{0.957} & \textbf{0.966} & \textbf{0.965} & \textbf{0.962} \\
  E. sieberi & 0.857 & 0.813 & \textbf{0.881} & \textbf{0.875} \\
  E. tricarpa & \textbf{0.969} & \textbf{0.970} & \textbf{0.971} & \textbf{0.965} \\
   \hline
\end{tabular}
\caption{AUC cross-validation results for all species with at least 100 presence-absence data points. The first three methods are evaluated with 1000 non-missing presence absence data points for the species under study. In each row, numbers are bolded for methods coming within 0.01 of the best method. Our method pooling presence-absence and presence-only data for all species performs well consistently, coming within 0.01 of the best method for all but one species.}
\end{table}

\section{Discussion}\label{sec:discussion}

We have proposed a unifying Poisson process model that allows for
joint analysis of presence-absence and presence-only data from many
species.  By sharing information, we can obtain more precise and
reliable estimates of the species intensity than we could obtain from
either data set by itself.

Moreover, we have seen in Section~\ref{sec:euca} that the proportional
bias can be a reasonable fit for some real ecological data sets.  In
this data set, and we suspect in many others, sampling bias can have a
major effect on fitted intensities if not appropriately accounted
for.

\subsection{Benefits of Data Pooling}

Throughout we have focused mainly on the way that pooling
presence-absence and presence-only data from
many species can help address selection bias. Even when
selection bias is not a major concern, data pooling can still be
beneficial.

In the simplest case, presence-absence data can be fruitfully
supplemented by more abundant presence-only data from the same
species. In Figure~\ref{fig:cvEucadive}, we see that the
presence-only data for {\em E. dives} is not very biased, as evidenced
by the good performance of the unadjusted fit. In this case, combining
the presence-absence data with presence-only data still led to a
substantial improvement in predictive performance, and combining with
data from other species helped even more.
In other cases, we may have presence-only data for many species but
no presence-absence data. In that case, our method still provides a
means for pooling data to estimate $\delta$ more efficiently.

\subsection{Common Misspecifications of the IPP Model}\label{sec:misspec}

Aside from the proportional bias assumption, we should be mindful of
several other sources of misspecification. The most obvious is that
our log-linear functional form is almost certainly incorrect in any
given case.  Three others that
merit special consideration are spatial autocorrelation
in the data, biased detection of presence-absence data, and spatial
errors in environmental covariates and point observations.

\paragraph{Spatial Autocorrelation:} The Poisson process model assumes
that, given the
covariates for a given site, an individual is no more or less likely to
occur simply because there is another individual nearby.
In ecological data, this assumption is rather tenuous;
for example, trees of the same species often occur together in
stands; or, different species may compete with each
other for resources. \cite{renner2013equivalence} discuss
goodness-of-fit checks and present empirical evidence against the
Poisson assumption. For a more general discussion of alternatives to
the Poisson process model, see \cite{cressie1993,gaetan2009spatial}.

Similarly, for systematic survey data, we should proceed with caution
in modeling count data as Poisson, because actual counts may be
overdispersed due to autocorrelation within a quadrat, or correlated
with counts for nearby sites because of longer-range autocorrelation.
When autocorrelation is present, nominal standard errors
computed under the Poisson assumption can be much too small, as can
i.i.d. cross-validation estimates of prediction error or
i.i.d. bootstrap standard errors.
Resampling methods such as the bootstrap or cross-validation can be
made much more robust to autocorrelation if they resample whole blocks
at a time \citep{efron1993introduction}, and in Section~\ref{sec:euca}
we use the block bootstrap and block cross-validation to analyze our
eucalypt data set. Discussion of alternative block bootstrap
procedures and choosing block size may be found in
\citet{hall1995blocking,nordman2007optimal,guan2007thinned}.

\paragraph{Imperfect Detection:} Even in presence-absence and other
systematic survey data,
surveyors may not have the time or resources to exhaustively survey
a given quadrat, and thus some organisms may be missed in the
surveys.

Suppose, for example, that an organism at $s$
is detected by surveyors with probability $q(s)$. Then the count $y$
in quadrat $A$ centered at $s$ is not distributed as
$\text{Pois}(\lambda(s)|A|)$, but rather as
$\text{Pois}(q(s)\lambda(s)|A|)$. If $q(s)$ is
constant, all our estimates of $\alpha_k$ will be biased downward by
exactly $\log q$. This would bias estimates of abundance but not the
estimated species
distribution, which depends only on $\hat\beta_k$.

If $q(s)$ is a non-constant function of $s$ --- e.g., if non-detection
is a bigger problem in heavily forested sites --- then we may incur bias
for both $\alpha_k$ {\em and} $\beta_k$.  If sites are visited
repeatedly, then under some assumptions an estimate of non-detection
may be obtained, by methods discussed in
e.g. \citet{royle2003estimating, dorazio2012predicting}. Estimates of
detection probability can sometimes be obtained without repeat observations
under stronger modeling assumptions \citep{lele2012dealing,solymos2012conditional}.

Non-detection in
presence-absence data is largely analogous to the sampling bias
problem for presence-only data, and we could in principle model and
adjust for it using similar methods to the ones we propose for
addressing biased presence-only data.

\paragraph{Spatial Errors}

Opportunistic presence-only data may also suffer from errors in the
recorded locations of point observations. Similarly, environmental
covariates are often measured at a relatively coarse scale, in which case
the covariates attributed to point $s_i$ may be inaccurate. If
important environmental covariates fluctuate on a fine scale compared
to the scale of these errors, the errors may lead to attenuated
effect size estimates \citep[see e.g.][]{graham2008influence}.
\citet{hefley2013correction} propose methods to correct for spatial
errors in presence-only records.

A similar issue can arise in the analysis of presence-absence or count
data, when we use the centroid of a
presence-absence quadrat as a proxy for the integral
$\int_{A_i}\lambda(s)\,ds$, which may not be appropriate if
the variables fluctuate on a fine scale relative to quadrat size.
In such cases, it is especially helpful to record point locations
within quadrats rather than recording only presence-absence or count
data summarized at the quadrat level.

\subsection{Extensions}

As discussed elsewhere, there are many useful ways to extend GLM
fitting procedures.  GAMs, gradient-boosted trees, and other forms of
regularization on model parameters are all immediate
extensions of the approach we have outlined here.  Like other methods,
our method's results on a given data set will depend
on making good choices regarding featurization and regularization.

% In specific data sets spatial autocorrelation,
% imperfect detection or spatial errors may be important to address.
% When the baseline versions of single-species IPP or presence-absence
% models require modification to address these issues,
% our method will typically require similar modifications. Addressing
% every issue is outside of our scope here, but we believe
% the literature addressing these issues for
% single-species data should provide guidance for how to proceed.

Finally, in our approach we are forced to assume a
functional form for the sampling bias, and if our model is
wrong, we will not account correctly for the sampling bias.
Studies quantifying patterns of sampling bias in relation to
spatial covariates are currently scarce, but could help to justify a
more accurate model of sampling bias than one based on intuitive
selection of covariates, as applied here. Nonetheless,
in future work we plan to investigate models that treat the sampling
bias nonparametrically, imposing no assumptions on its functional form.

\section*{Acknowledgements}

Survey data were sourced from the NSW Office of Environment and Heritage’s (OEH) Atlas of NSW Wildlife, which holds data from a number of custodians. Data obtained July 2013.
Many thanks to Philip Gleeson, OEH, for help with understanding the
database and for checking quarantined records for us. And to
Christopher Simpson, OEH, for making the distance to roads
layer. William Fithian was supported by National Science Foundation VIGRE grant DMS-0502385. Jane Elith
was funded by Australian Research Council grant FT0991640. Trevor
Hastie was partially supported by grant DMS-1007719 from the National
Science Foundation, and grant RO1-EB001988-15 from the National
Institutes of Health. Finally, we are very grateful to Trevor Hefley,
Geert Aarts, and our editors, for their very thorough and
helpful comments which greatly improved our manuscript.

\bibliographystyle{plainnat}
\bibliography{biblio}

\newpage

\begin{appendix}

  \section{Maximum Likelihood Estimation as a Joint GLM}
  \label{secComp}

  Recall that maximizing likelihood for the full data set means
  maximizing
  \begin{linenomath}\begin{equation}\label{eq:fullApx}
    \ell(\theta) = \sum_k \ell_{k,\PA}(\alpha_k,\beta_k) + \ell_{k,\PO}(\alpha_k,\beta_k,\gamma_k,\delta),
  \end{equation}\end{linenomath}
  where
  \begin{linenomath}\begin{align}\label{eq:paApx}
    \ell_{k,\PA}(\alpha_k,\beta_k)
    &= \sum_{i\in I_\PA}
    -y_{ik}\log
    \left(1-e^{-\exp\{\alpha_k+\beta_k'x_i\}}\right)+(1-
    y_{ik})\exp\{\alpha_k+\beta_k'x_i\}\\\label{eq:poApx}
    \ell_{k,\PO_k}(\alpha_k,\beta_k,\gamma_k,\delta) &\approx
    \sum_{i\in I_{\PO_k}} \left(\alpha_k + \beta_k'x_i +
      \gamma_k + \delta'z_i\right)
    -\sum_{i\in I_\BG} w_i e^{\alpha_k + \beta_k'x_i +
      \gamma_k + \delta'z_i}
  \end{align}\end{linenomath}
  In this section we discuss how to massage~(\ref{eq:fullApx}) into a
  large GLM in terms of a common set of $m(p+2)+r$ predictors and
  coefficients. For the moment, we ignore the sum over $I_{\PO_k}$
  in~(\ref{eq:poApx}) and deal with the other two sums. The sum
  in~(\ref{eq:paApx}) is the log-likelihood for a Bernoulli GLM with
  complementary log-log link and the sum over $I_\BG$
  in~(\ref{eq:poApx}) is the log-likelihood for a weighted Poisson GLM
  with log link.

  Note that at each survey site we have $m$
  presence-absence observations, one for every species. Similarly, we
  will introduce one ``dummy'' response $y_{ik}=0$ for each species
  $k$ at each background site $i$, for $m(n_\PA+n_\BG)$ total
  observations. For observation $ik$, introduce auxiliary indicator
  variables
  \begin{linenomath}\begin{align}
    u_{ik_1,k_2} &= \left\{\begin{matrix} 1 & k_1 = k_2\\
        0 & \text{otherwise}\\\end{matrix}\right.\\
    v_{ik} &= \left\{\begin{matrix} 1 & i\in I_\BG\\
        0 & \text{otherwise}\\\end{matrix}\right.
  \end{align}\end{linenomath}
  The variable $u_k$ allows parameters to vary by species. For
  example, $\alpha_k$ is the coefficient for $u_k$ and $\beta_{k,j}$
  is the coefficient for the interaction $x_j u_k$. The variable $v$
  gives us bias terms that apply only to terms in the presence-only
  likelihood. Thus $\gamma_k$ is the coefficient for $u_k v$ and
  $\delta_j$ is the coefficient for $z_j v$.

  For example, the linear
  predictor for count or presence-absence for species $k$ at a
  survey site with predictors $x$ and $z$ is
  \begin{linenomath}\begin{equation}
    \alpha_k + \beta_k'x_i = \sum_{1\leq h \leq m} \left(\alpha_h u_{ik,h} +
      \beta_{h}' x_i u_{ik,h} + \gamma_h u_{ik,h} v_{ik}\right) + \delta' z_{i} v_{ik},
  \end{equation}\end{linenomath}
  using $v=0$ because we are predicting for presence-absence data.

  To check the proportional bias assumption for variable $z_j$ ---
  that is, to check the assumption that $\delta_j$ should be the same
  for every species --- we can augment the model with interactions
  $z_{j*k} = u_kz_j$ for each $k$, and test the hypothesis that each
  of those variables has no effect on the regression.

  Let $X_\PA$ denote the $n_\PA \times p$ matrix with all $x$
  variables for all the survey sites, and let $X_\BG$ and $Z_\BG$
  denote all the $x$ and $z$ variables for all the background sites.
  Then if
  \begin{linenomath}\begin{equation}
    X = \begin{pmatrix} 1 & X_{\text{PA}} & 0\\ 1 & X_{\text{BG}}
      & 1\end{pmatrix},\;\;\;
    Z = \begin{pmatrix} 0 \\ Z_{\text{BG}}\end{pmatrix},
  \end{equation}\end{linenomath}
  our likelihood is a large weighted GLM with
  $m(n_{\text{PA}}+n_{\text{BG}})$ observations and overall design matrix
  \begin{linenomath}\begin{equation}
    \mathbb{X} =
    \begin{pmatrix}
      X & 0 & \cdots & 0 & Z\\
      0 & X & \cdots & 0 & Z\\
      \vdots & \vdots & \ddots & \vdots & \vdots\\
      0 & 0 & \cdots & X & Z\\
    \end{pmatrix}, \quad \text{ and coefficients } \quad
    \theta = \begin{pmatrix} \theta_1 \\ \vdots \\ \theta_m \\
      \delta \end{pmatrix}.
  \end{equation}\end{linenomath}
  The weights are $w_i$ for rows corresponding to background site $i$,
  and 1 for presence-absence sites.
  Note that the response family and link function are different for
  different rows.

  Turning to the sums over $I_{\PO_k}$
  in~(\ref{eq:poApx}), note that
  they are linear in the coefficients, so all $k$ sums can be combined
  to obtain a single linear term of the form $\theta'M$.
  All the parameters may be estimated simultaneously
  via a slight modification of iterative reweighted least squares that
  takes into account the $m$ linear terms.

  \subsection{Iterative Reweighted Least Squares Using Block Structure}
  Let $n=n_{\text{PA}}+n_{\text{BG}}$.
  $\bbX$ has $mn$ rows and $m(p+2) + r$ columns.
  In principle, we could form the matrix $\bbX$ and use standard GLM
  software to fit the model, but we would pay a
  very high computational price for estimating multiple species at a
  time.

  The main computational bottleneck in each iteration is solving a
  large weighted linear least-squares problem with $mn$ equations (one
  per species per site) and $m(p+2) + r$
  unknowns.  The update for step $t$ requires solving a weighted
  linear least-squares problem with row weights
  $W^{(t)}=\diag\left(w^{(t)}\right)$ and working responses $u^{(t)}$:
  \begin{linenomath}\begin{equation}\label{eq:wtdLS}
    \min_\theta \left\| W^{(t)} \left(\bbX
        \theta - u^{(t)}\right)\right\|_2^2.
  \end{equation}\end{linenomath}
  Solving a completely general problem of the form~\eqref{eq:wtdLS} would
  require $\mathcal{O}(m^3np^2+mnr^2)$ floating point operations.
  Fortunately, we can store and compute
  much more cheaply if we exploit the special block structure of $\bbX$.

  Our computational scheme relies heavily on the following well-known
  and highly useful lemma:
  \begin{lemma}[Partitioned Least Squares]\label{lem:partLS}
    Consider the least-squares problem
    \begin{linenomath}\begin{equation}\label{eq:lemLS}
      \min_{v} \left\|(A\;\; B) \binom{v_1}{v_2} - c\right\|_2^2.
    \end{equation}\end{linenomath}
    Let $B._{A}$ represent the matrix $B$ with each column
    orthogonalized with respect to the column space of $A$.  Then for
    $v^*$ solving \eqref{eq:lemLS} we have
    \begin{linenomath}\begin{equation}
      B._A'B._A v_2^* = B._{A}'c = B._{A}'c._{A}.
    \end{equation}\end{linenomath}

    That is, the least-squares coefficients for $B$ may be obtained by
    first regressing the columns of $B$ on $A$, then regressing $c$ on the
    residuals.
  \end{lemma}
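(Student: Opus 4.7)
The plan is to prove this via the standard Frisch--Waugh--Lovell argument, using the normal equations for the joint least-squares problem and then eliminating $v_1$. Throughout I will write $P_A$ for the orthogonal projection onto the column space of $A$ and $M_A = I - P_A$ for the projection onto its orthogonal complement, so that $B._A = M_A B$ and $c._A = M_A c$. The key properties I will use are that $M_A$ is symmetric and idempotent, and that $M_A A = 0$.

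First I would write down the normal equations for the full problem~\eqref{eq:lemLS}, namely
\begin{equation}
\begin{pmatrix} A'A & A'B \\ B'A & B'B \end{pmatrix}
\begin{pmatrix} v_1^* \\ v_2^* \end{pmatrix}
=
\begin{pmatrix} A'c \\ B'c \end{pmatrix}.
\end{equation}
From the top block, any optimizer satisfies $A'A v_1^* = A'(c - B v_2^*)$, which says that $A v_1^* = P_A(c - B v_2^*)$. Substituting into the bottom block gives
\begin{equation}
B'A v_1^* + B'B v_2^* = B' P_A (c - B v_2^*) + B'B v_2^* = B'c,
\end{equation}
so $B'(I - P_A) B v_2^* = B'(I - P_A) c$, i.e.\ $B' M_A B v_2^* = B' M_A c$.

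The remaining step is cosmetic: using $M_A = M_A' M_A$ (since $M_A$ is symmetric and idempotent), the left side equals $(M_A B)'(M_A B) v_2^* = B._A' B._A v_2^*$, and the right side equals $(M_A B)' c = B._A' c$. Since $M_A B$ lies in the column space orthogonal to $A$, we also have $B._A' c = B._A' M_A c = B._A' c._A$, giving the second equality in the conclusion. This yields the stated identity.

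The main obstacle is really only one of presentation rather than of substance; this is a classical result and the one subtlety is the rank assumption on $A$. If $A$ does not have full column rank, then $v_1^*$ is not unique, but the argument above goes through verbatim with the pseudoinverse $(A'A)^+$ in place of $(A'A)^{-1}$, since only the projections $P_A$ and $M_A$ (which are well defined regardless) appear in the final equation for $v_2^*$. Thus the lemma holds without any rank assumption on $A$, and if $B._A$ has full column rank then $v_2^*$ is uniquely determined.
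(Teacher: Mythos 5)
Your proof is correct, and it takes a genuinely different (though closely related) route from the paper's. You work directly with the block normal equations for the joint problem, solve the top block for $A v_1^*$ as the projection $P_A(c - Bv_2^*)$, and substitute into the bottom block to eliminate $v_1^*$ — the classical Frisch--Waugh--Lovell elimination. The paper instead reparametrizes the design: writing $B = AM + B._A$ with $M$ the coefficients from regressing $B$ on $A$, it observes that $(A \;\; B)$ and $(A \;\; B._A)$ span the same column space with solutions in correspondence via $v_1 = \bar v_1 - M\bar v_2$, $v_2 = \bar v_2$, and then uses the orthogonality of $A$ and $B._A$ to decouple the problem into two separate regressions. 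The two arguments buy slightly different things: your normal-equations version makes the rank issue transparent (only the projections $P_A$ and $M_A$ ever appear, so no full-rank assumption on $A$ is needed, a point the paper leaves implicit), while the paper's reparametrization directly produces the back-substitution formula $v_1^* = \bar v_1^* - M v_2^*$ that Algorithm~\ref{algo:block} relies on to reconstruct the per-species coefficient blocks. Your top-block identity $A'A v_1^* = A'(c - Bv_2^*)$ recovers the same formula with one more line, so nothing essential is lost either way.
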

  \begin{proof}
    Let $M$ be least-squares coefficients for regression of $B$ on
    $A$; that is,
    \begin{linenomath}\begin{equation}
      B = AM + B._A
    \end{equation}\end{linenomath}
    Then,~(\ref{eq:lemLS}) is equivalent to the least-squares problem
    \begin{linenomath}\begin{equation}\label{eq:lemLS2}
      \min_{\bar v} \left\|(A\;\; B._A) \binom{\bar v_1}{\bar v_2} - c\right\|_2^2.
    \end{equation}\end{linenomath}
    To see why, note that
    \begin{linenomath}\begin{equation}
      A\bar v_1 + B._A \bar v_2 = A\left(\bar v_1 - M v_2\right) +
      B\bar v_2
    \end{equation}\end{linenomath}
    so solutions to~(\ref{eq:lemLS}) and~(\ref{eq:lemLS2}) are in
    direct correspondence with one another, with $v_2 = \bar v_2$.

    Moreover, because the two blocks in~(\ref{eq:lemLS2}) are
    orthogonal to each other, we can solve the problem by separately
    regressing $c$ on $A$ and on $B._A$ to obtain $\bar v_1^*$ and
    $v_2^*=\bar v_2^*$.
  \end{proof}
  Our proof implies further that having obtained $M$ and
  $v_2^*$, we can compute $v_1^* = \bar v_1^* - Mv_2^*$.

  \subsection{Least Squares with Block Structure}

  Suppressing the $t$ superscript, we need to solve a least squares
  problem with design matrix $W\bbX$ and response vector $u$.  Writing
  \begin{linenomath}\begin{equation}
%    W = \begin{pmatrix} W_1 & & & \\ & W_2 & & \\ &  & \ddots & \\ & &
%      & W_m \end{pmatrix},
    W = \begin{pmatrix} W_1 & &  \\ &  \ddots & \\ &
      & W_m \end{pmatrix},
  \end{equation}\end{linenomath}
  we have
  \begin{linenomath}\begin{align}\label{eq:blockWX}
    W\bbX =
    \begin{pmatrix} W_1 X  & &  & W_1 Z\\  &  \ddots & & \vdots\\ & &
      W_m X & W_m Z\end{pmatrix} &=
    \begin{pmatrix} X_1  & &  & Z_1\\  &  \ddots & & \vdots\\ & &
      X_m & Z_m\end{pmatrix}.
  \end{align}\end{linenomath}
  Let $\theta_1^*, \ldots, \theta_{m+1}^*$ be the blocks of least-squares
  coefficients corresponding to the column blocks
  in~(\ref{eq:blockWX}).  Writing $W\bbX = (\cX\;\cZ)$,
  Lemma~\ref{lem:partLS} means that given $\widetilde \cZ = \cZ._\cX$,
  we can efficiently solve for the coefficients $\theta_{m+1}$ by
  solving the $r\times r$ system
  \begin{linenomath}\begin{equation}\label{eq:rByr}
    \widetilde\cZ'\widetilde \cZ\theta_{m+1}^* =
  \widetilde\cZ' u
  \end{equation}\end{linenomath}

  Because $\cX$ is block diagonal, the $k$th row block of
  $\widetilde \cZ$ is $\widetilde Z_k=Z_k._{X_k}$; that is,
  orthogonalizing $\cZ$
  with respect to $\cX$ is equivalent to orthogonalizing each $Z_k$
  independently with respect to the corresponding $X_k$.
  After computing a single QR decomposition of $X_k$,
  we compute and store the least-squares coefficients
  $\bar\theta_k$ and $\Gamma_k$ from regressing $u_k$ and $Z_k$ on
  $X_k$.  Having done this we can also compute the residuals
  $\widetilde Z_k$ cheaply.

  To obtain $\theta_{m+1}^*$ in the end, we need only keep a running
  tally of the quantities appearing in~(\ref{eq:rByr}),
  \begin{linenomath}\begin{equation}
    \widetilde \cZ'\widetilde \cZ = \sum_k \widetilde Z_k'\widetilde
    Z_k,\quad \text{ and } \quad
    \widetilde \cZ'u = \sum_k \widetilde Z_k'u_k,
  \end{equation}\end{linenomath}
  and solving~(\ref{eq:rByr}) gives $\theta_{m+1}^*$.
  Now, per Lemma~(\ref{eq:lemLS}), we can reconstruct all of
  $\theta^*$ if we retain the least-squares coefficients of $u$ and
  $Z_k$ on $X_k$ at every step.  Algorithm~\ref{algo:block} gives the
  full details of the procedure.

  Most of the computational will typically be spent computing the QR
  decompositions of the blocks $X_k$.  Each QR decomposition requires
  $\mathcal{O}(np^2)$ operations, so that $\mathcal{O}(mnp^2)$ total
  operations are required for this step.  Computing $\widetilde
  \cZ'\widetilde \cZ$ requires $\mathcal{O}(mnr^2)$ operations.  Thus
  our method requires $\mathcal{O}(mn(p^2+r^2))$ operations, compared
  to $\mathcal{O}(m^3np^2+mnr^2)$ required for the naive
  method.  For $m=36$ species with $p \approx r$, for example,
  our method does roughly 650 times less work than the naive
  approach.

  Our method is also lightweight with respect to its storage costs.
  After one block's computation is completed in the first for loop of
  Algorithm~\ref{algo:block}, we do not need to store $u_k, Z_k, X_k$,
  or its QR decomposition.  We need only store the $p(r+1)$
  least-squares coefficients from each step.

  \begin{algorithm}
    \SetAlgoLined
    $A \gets 0_{r\times r}$\;
    $b \gets 0_r$\;
    \For{$k=1,\ldots,m$}{
      Compute QR decomposition of $X_k$\;
      Regress $Z_k$ on $X_k$ to obtain  $Z_k = X_k \Gamma_k + \widetilde Z_k$\;
      $A \gets A + \widetilde Z_k'\widetilde Z_k$\;
      Regress $u_k$ on $X_k$ to obtain  $u_k = X_k \bar\theta_k
      + \tilde u_k$\;
      $b \gets b + \widetilde Z_k'\tilde u_k$\;
    }
    Solve $A\theta_{m+1}^* = b$ for $\theta_{m+1}^*$\;
    \For{$k=1,\ldots,m$}{
      $\theta_k^* \gets \bar\theta_k - \Gamma_k\theta_{m+1}^*$\;
    }
    \caption{Efficient Least-Squares Using Block Structure of $W\bbX$}
    \label{algo:block}
  \end{algorithm}

\newpage

\section{Results of Eucalypt Study in More Detail}\label{sec:moreResults}

In Section~\ref{sec:check} we examined the assumption of a
proportional sampling bias, and discussed how to check this assumption
based on the data.  We checked whether distance-to-coast, an
important bias covariate, had a {\em species-specific} effect on the
sampling bias for the individual species {\em E. punctata}.
The data suggested there was an effect.  Figure~\ref{fig:allInter}
shows the analogous fitted curve for the 35 other species in the data
set.  As we see, many of these species exhibit an effect that is
either very nearly or not quite distinguishable from no effect.

We also computed cross-validated predictive performance on
log-likelihood and AUC in
Figures~\ref{fig:allLoglik}--~\ref{fig:allAUC}.  Some of
the rarest species only appeared on one or two of the geographic
blocks, so we exclude them from the cross-validation results.

For the cross-validated models, the R formula used to model the
species distribution is
\begin{quote}
\begin{verbatim}
~predict(bc04.basis, newx = bc04)
+ predict(rsea.basis, newx = rsea)
+ predict(bc33.basis, newx = bc33)
+ predict(bc12.basis, newx = bc12)
+ predict(rjja.basis, newx = rjja)
+ bc02 + bc05 + bc14 + bc21
+ bc32 + mvbf + rugg + factor(subs)
+ twmd + twmx
\end{verbatim}
\end{quote}
Descriptions of the environmental variables can be found in
Appendix~C.  Terms like \texttt{predict(var.basis, newx=var)}
appear when we have used a customized natural spline basis for the
variable \texttt{var}.

The R formula used to model the bias is
\begin{quote}
\begin{verbatim}
~predict(survey.bg.basis,newx=survey.bg)
+ predict(ld2coast.basis,newx=ld2coast)
+ predict(alongCoast.basis,newx=alongCoast)
+ predict(ld2r.basis,newx=ld2r)
+ predict(d2t.basis,newx=d2t)
+ predict(rugg.basis,newx=rugg)
+ xveg
\end{verbatim}
\end{quote}
The variable \texttt{survey.bg} is a geographic variable corresponding
to the logarithm of how many presence-absence survey sites are located
in a grid cell.  It is meant to proxy for the log-frequency of ecologist
visits to locations near a site $s\in \D$.  Note that the locations of
presence-absence survey sites are not modeled as random in our model,
so we are not using the response variable twice by doing this.

The variable \texttt{ld2coast} is the logarithm of 1 km plus the
distance to coast, and \texttt{ld2r} is the logarithm of 1m plus the
distance to the nearest road.  \texttt{d2t} is the distance to the
nearest town.  \texttt{alongCoast} is a projection of geographic
location in a direction running parallel to the coast; it is largest in the
northeastern part of the study region and smallest in the southwestern
part.  \texttt{rugg} is ruggedness of terrain, and \texttt{xveg} is
a binary indicator of whether a location has extant vegetation (e.g.,
it would be 1 in a forest and 0 in a wheat field).

\begin{figure}
  \centering
  \includegraphics[width=\textwidth]{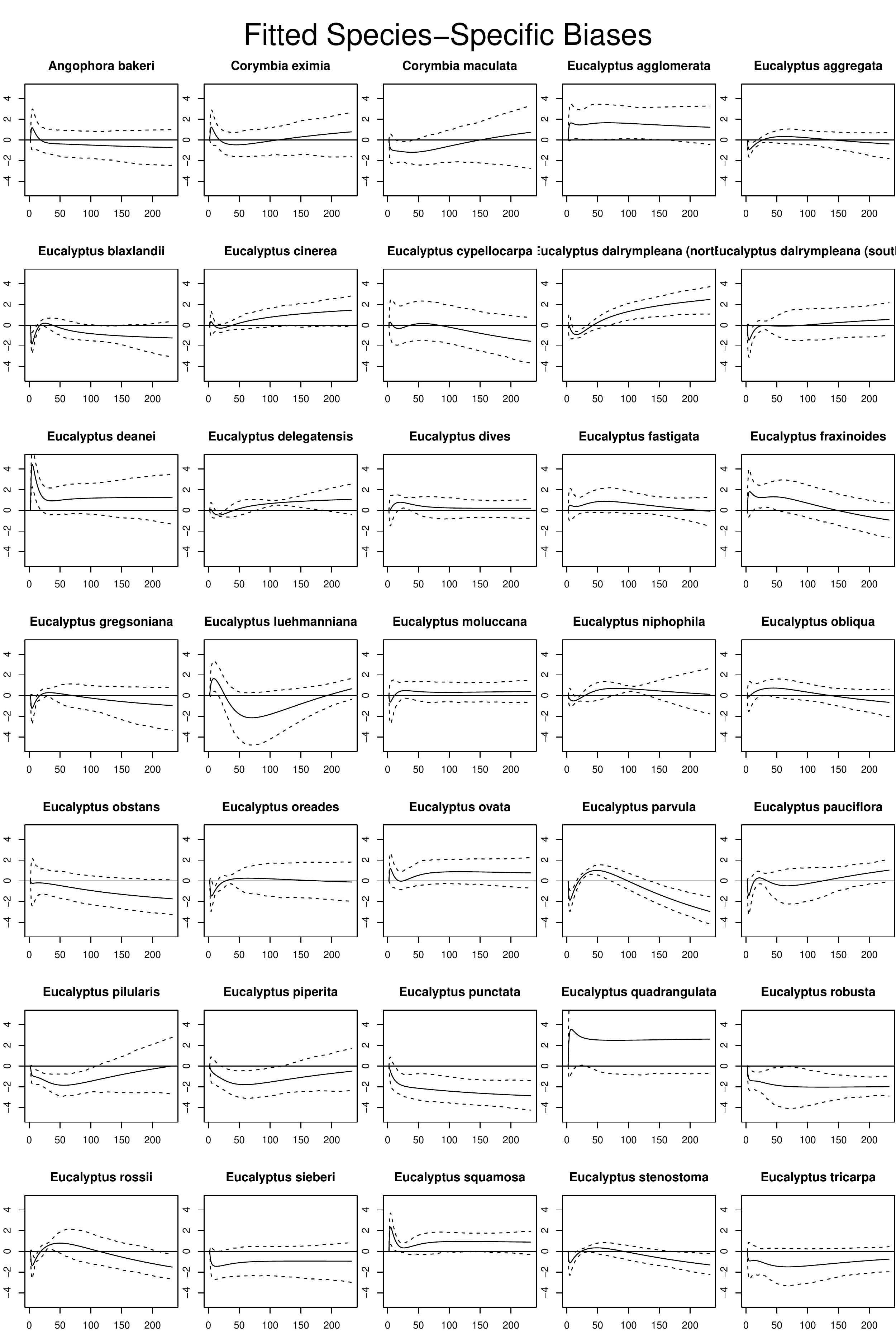}
  \caption{Bootstrap confidence intervals for the species-specific
    effect of distance-to-coast on log-sampling bias, for each of the 35
    species other than {\em Eucalyptus parramattensis}, whose data set
    is too small for the block bootstrap.  Some of the confidence intervals
    exclude zero for a significant effect.}
  \label{fig:allInter}
\end{figure}

\begin{figure}
  \centering
  \includegraphics[width=\textwidth]{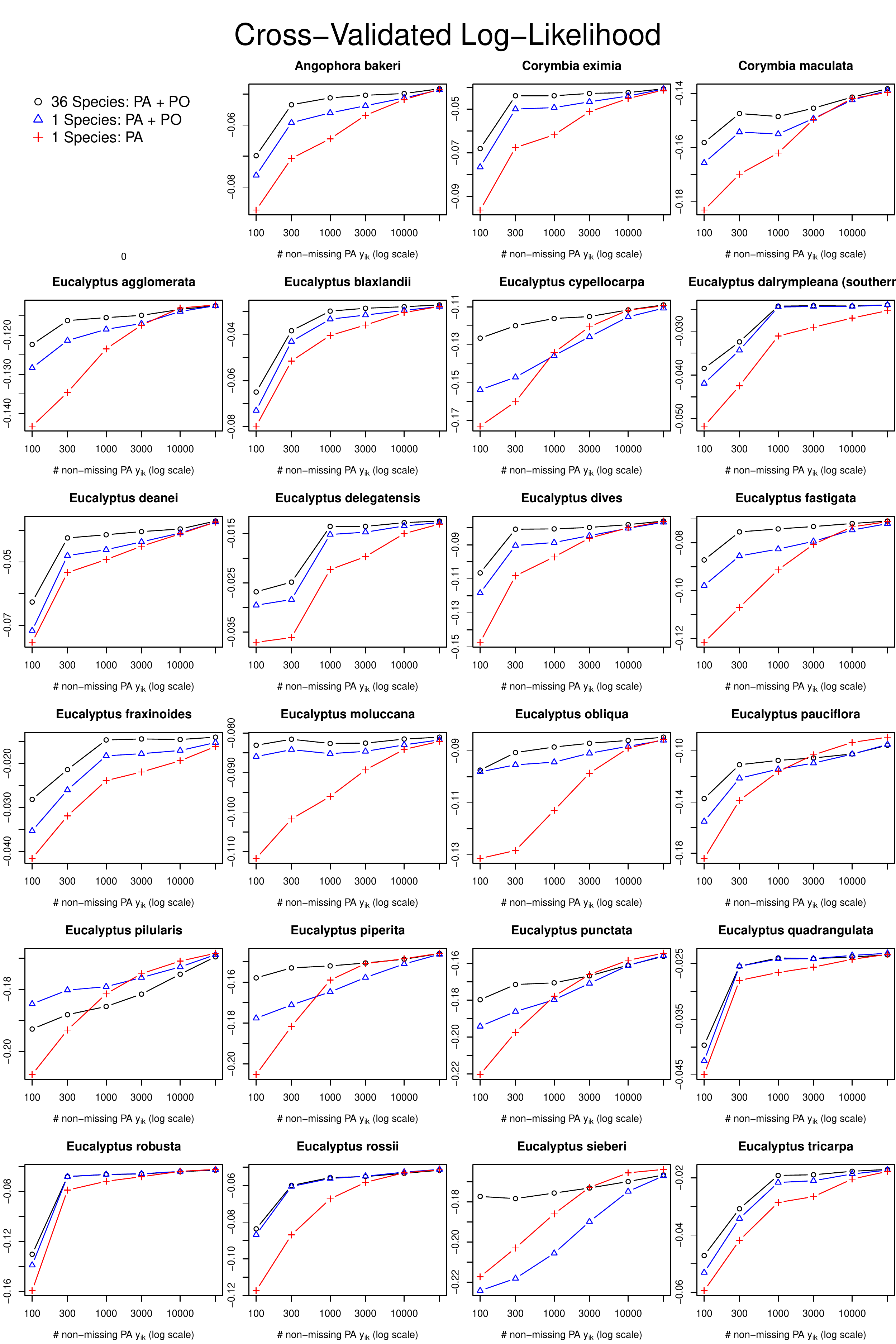}
  \caption{Cross-validation results for all species that were
    observed in at least 110 different presence-absence sites.
    Results vary for different species and different methods, but the
    method that pooled data across all 36 species had consistently
    superior performance.}
  \label{fig:allLoglik}
\end{figure}

\begin{figure}
  \centering
  \includegraphics[width=\textwidth]{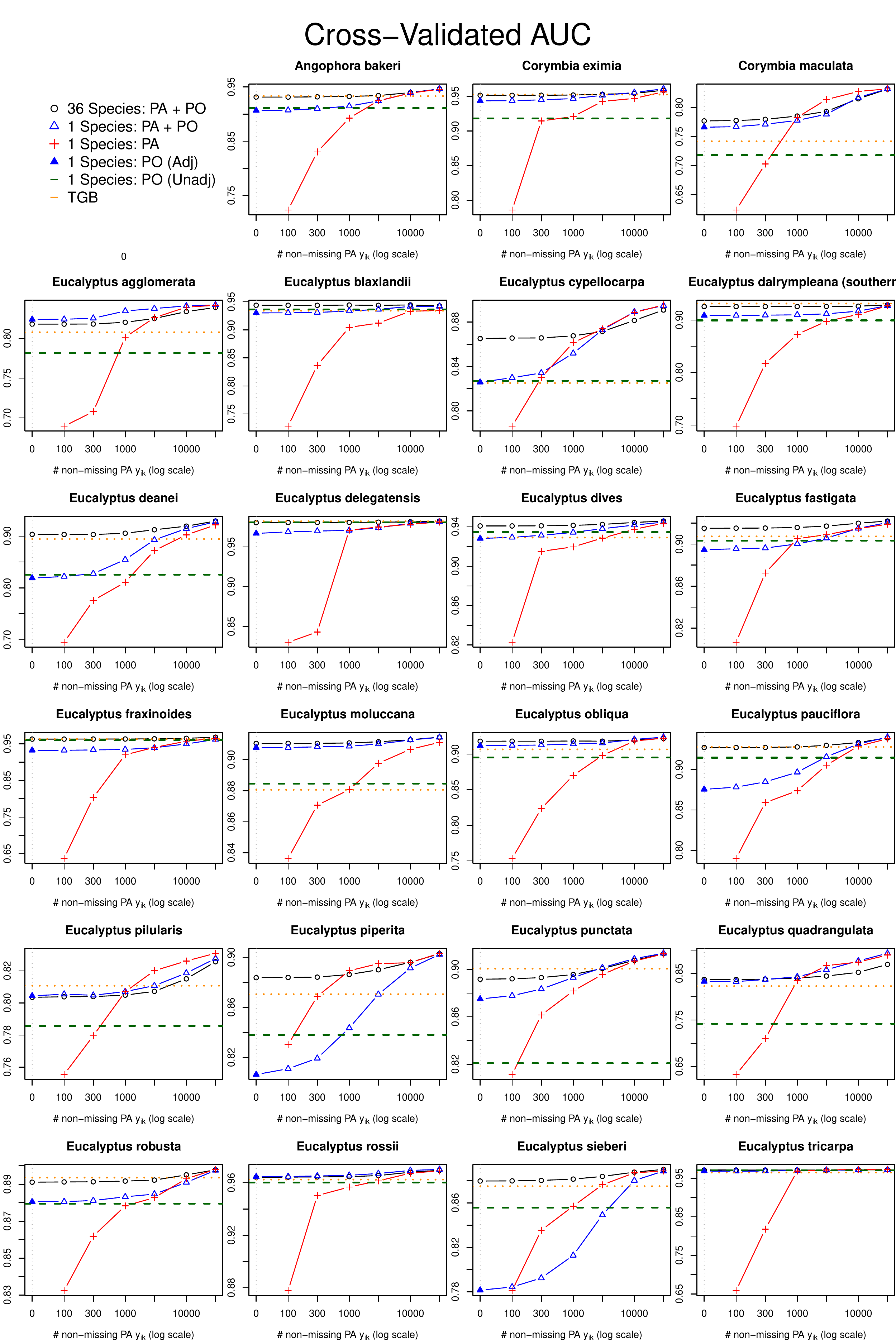}
  \caption{Cross-validation results for all species that were
    observed in at least 110 different presence-absence sites.
    Results vary for different species and different methods, but the
    method that pooled data across all 36 species had consistently
    superior performance.}
  \label{fig:allAUC}
\end{figure}

\includepdf[pages={1-},scale=0.8]{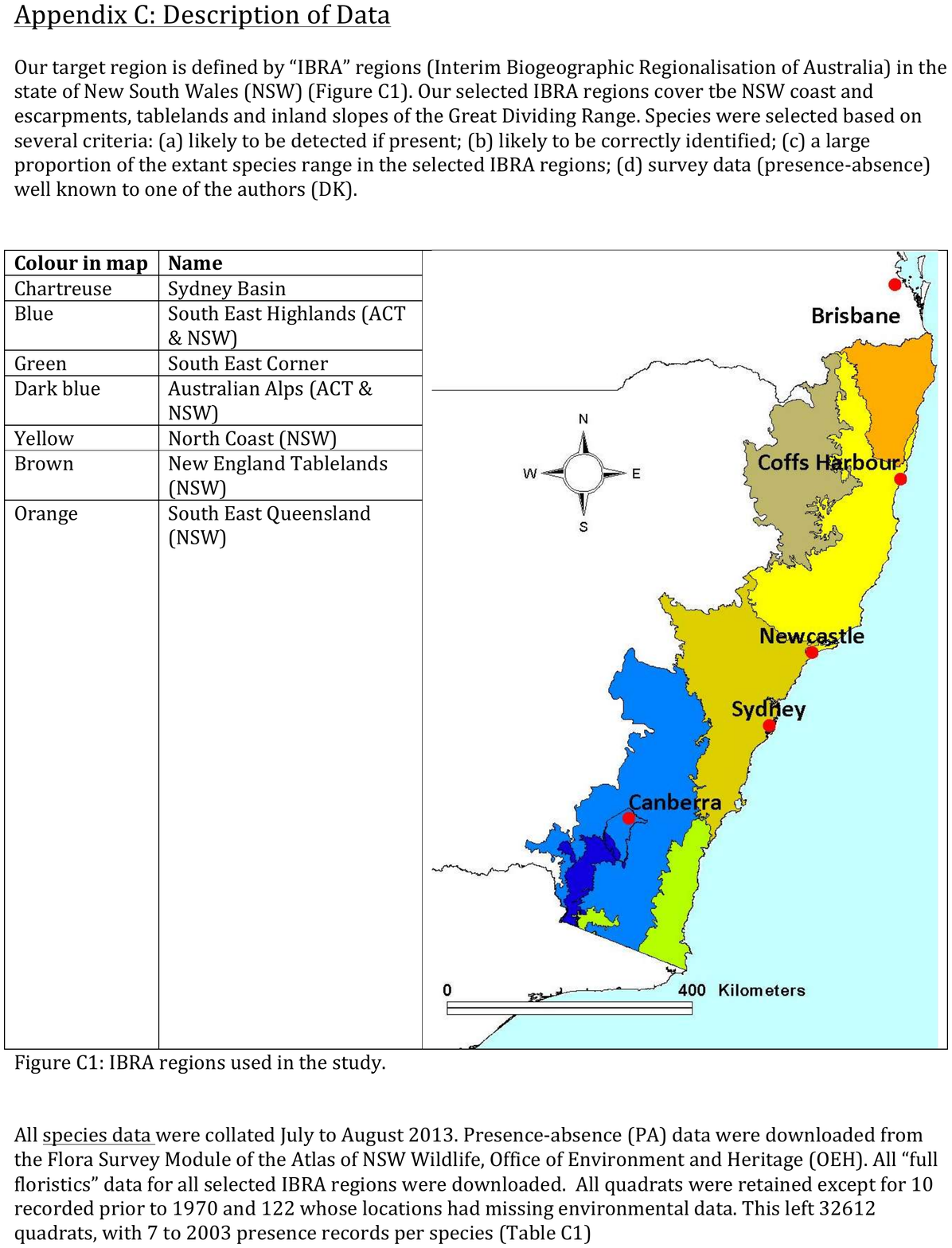}

\end{appendix}

\end{document}